\documentclass{llncs}
\pdfoutput=1
\usepackage{amssymb}
\usepackage{amsmath}
\usepackage{latexsym}
\usepackage{textcomp}
\usepackage{epsfig}
\usepackage{xspace}
\usepackage{todonotes}
\usepackage{paralist} 

\usepackage{wrapfig}

\usepackage[labelformat=simple]{subcaption}
\captionsetup{compatibility=false}

\graphicspath{{figures/}}



\spnewtheorem*{sketch}{Sketch of proof}{\itshape}{\rmfamily}

\newcommand{\remove}[1]{}

\DeclareMathOperator{\Ham}{\mathsf{Ham}}%
\DeclareMathOperator{\aug}{\mathsf{aug}}%
\DeclareMathOperator{\sub}{\mathsf{sub}}%
\newcommand{\col}{\mathrm{col}}
\newcommand{\C}{\mathcal C}%
\newcommand{\Pa}{\mathcal P}%

\begin{document}
\title{Colored Point-set Embeddings of Acyclic Graphs\thanks{The work has been supported in part by 
the European project  ``Geospatial based Environment for Optimisation Systems Addressing Fire 
Emergencies'' (GEO-SAFE), contract no. H2020-691161, by the Network Sciences and Technologies (NeST) initiative at University of Liverpool, and by 
the Italian project: ``RISE: un nuovo framework distribuito per data collection, monitoraggio e comunicazioni in contesti di emergency response'', Fondazione Cassa  Risparmio Perugia, code 2016.0104.021.}
}

\author{
Emilio Di Giacomo\inst{1},
Leszek Gasieniec\inst{2},
Giuseppe Liotta\inst{1},
Alfredo Navarra\inst{1}
}

\date{}

\institute{
Universit\`a degli Studi di Perugia, Italy\\
\email{\{emilio.digiacomo,giuseppe.liotta,alfredo.navarra\}@unipg.it}
\and
University of Liverpool, UK\\
\email{L.A.Gasieniec@liverpool.ac.uk}
}

\maketitle

\begin{abstract}
We show that any planar drawing of a forest of three stars whose vertices are constrained to be at fixed vertex locations may require $\Omega(n^\frac{2}{3})$ edges each having $\Omega(n^\frac{1}{3})$ bends in the worst case. The lower bound holds even when the function that maps vertices to points is not a bijection but it is defined by a 3-coloring. In contrast, a constant number of bends per edge can be obtained for 3-colored paths and for 3-colored caterpillars whose leaves all have the same color. Such results answer to a long standing open problem.
\end{abstract}

\section{Introduction}

A pioneering paper by Pach and Wenger~\cite{DBLP:journals/gc/PachW01} studied the problem of computing a planar drawing of a graph $G$ with the constraint that the mapping of the vertices to the points in the plane, that represent the vertices of $G$, is given as part of the input. Pach and Wenger proved that, for any given mapping, a planar graph with $n$ vertices admits a planar drawing such that the curve complexity, i.e. the number of bends per edge, is $O(n)$. Furthermore, they proved that the bound on the curve complexity is (almost surely) tight as $n$ tends to infinity when $G$ has $O(n)$ independent edges. This implies that the curve complexity of a planar drawing with vertices at fixed locations may be $\Omega(n)$ even for structurally very simple graphs such as paths or matchings, for which the number of independent edges is linear in $n$.

These results have motivated the study of a relaxed version of the problem where the function that associates vertices of the graph to points of the plane is not a bijection. Namely, an instance of the $k$-colored point set embeddability problem receives as input an $n$-vertex planar graph $G$ such that every vertex is given one of $k$ distinct colors and a set $S$ of $n$ distinct points, such that, each point is given one of the $k$ distinct colors. The number of points of $S$ having a certain color $i$ is the same as the number of vertices of $G$ having color $i$. The goal is to compute a planar drawing of $G$ with curve complexity independent of $n$ where every vertex of a specific color is represented by a point of the same color. When $k=n$ the $k$-colored point set embeddability problem coincides with the problem of computing a drawing with vertices at fixed locations and thus the lower bounds by Pach and Wenger hold. Therefore several papers have focused on small values of $k$ (typically $k \leq 3$) to see whether better bounds on the curve complexity could be achieved in this scenario (see, e.g.,~\cite{DBLP:journals/tcs/BadentGL08,DBLP:conf/walcom/GiacomoL10,DBLP:journals/ijfcs/GiacomoLT06,DBLP:journals/algorithmica/GiacomoLT10,DBLP:conf/gd/FratiGLLMN12}).

For $k=1$, Kaufmann and Wiese~\cite{DBLP:journals/jgaa/KaufmannW02} proved that every planar graph admits a $1$-colored point set embedding onto any point set with curve complexity at most $2$. For $k = 2$, outerplanar graphs always admit a $2$-colored point set embedding with $O(1)$ curve complexity~\cite{DBLP:journals/jgaa/GiacomoDLMTW08}. However, for $k \geq 2$, there are $2$-connected $k$-colored planar graphs for which a $k$-colored point set embedding may require $\Omega(n)$ bends on $\Omega(n)$ edges~\cite{DBLP:journals/tcs/BadentGL08}. This result extends the lower bound of Pach and Wenger~\cite{DBLP:journals/gc/PachW01} to a much more relaxed set of constraints on the location of the vertices, but it does so by using 2-connected graphs instead of just (not necessarily connected) planar graphs. For example, the problem of establishing tight bounds on the curve complexity of $k$-colored forests for small values of $k \geq 3$ is a long standing open problem (see, e.g.,~\cite{DBLP:journals/tcs/BadentGL08}).
We explicitly address this gap in the literature and consider the $k$-colored point set embeddability problem for acyclic graphs and $k \geq 3$. Our main results are as follows.

\begin{itemize}

\item In Section~\ref{se:lower}, we prove that a planar drawing of a forest of three stars and $n$ vertices may require $\Omega(n^\frac{2}{3})$ edges with $\Omega(n^\frac{1}{3})$ bends each, even if the mapping of the vertices to the points is defined by using a set of $k$ colors with $k\geq 3$. In contrast, a constant number of bends per edge can always be achieved if the number of stars is at most two (for any number of colors) or the number of colors  is at most two (for any number of stars).
    
\item Since the above result implies that $3$-colored point set embeddings of $3$-colored caterpillars may have a non-constant curve complexity, in Section~\ref{se:upper} we study subfamilies of $3$-colored caterpillars for which constant curve complexity is possible. We prove that every $3$-colored path and every $3$-colored caterpillar whose leaves all have the same color admit a $3$-colored point-set emebdding with constant curve complexity onto any $3$-colored point set. 

\item Finally, still in Section~\ref{se:upper}, we 
prove that any $4$-colored path $\pi$ such that the vertices of colors $0$ and $1$ precede all vertices of colors $2$ and $3$ when moving along $\pi$ has a $4$-colored point set embedding with at most five bends per edge onto any $4$-colored point-set. 

\end{itemize}

Concerning the lower bound, it is worth mentioning that the argument by Pach and Wenger~\cite{DBLP:journals/gc/PachW01}  does not apply to families of graphs where the number of independent edges is not a function of $n$. Hence, our lower bound extends the one by Pach and Wenger about the curve complexity of planar drawings with vertices at fixed locations also to those graphs for which the number of independent edges does not grow with $n$.
Some proofs can be found in the appendix.

\section{Preliminaries}\label{se:preliminaries}
Let $G=(V,E)$ be a graph. A \emph{$k$-coloring} of $G$ is a
partition $\{V_0, V_1, \dots, V_{k-1}\}$ of $V$. The integers
$0, 1, \dots, k-1$ are called \emph{colors} and $G$ is called a \emph{$k$-colored graph}.  For each vertex $v \in V_i$ we denote by $col(v)$ the
color $i$ of $v$. 

Let $S$ be a set of distinct points in the plane. For any point $p
\in S$, we denote by $x(p)$ and $y(p)$ the $x$- and $y$-coordinates
of $p$, respectively. We denote by $CH(S)$ the convex hull of $S$. Throughout the paper we
always assume that the points of $S$ have different
$x$-coordinates (if not we can rotate the plane so to achieve this
condition). 
A \emph{$k$-coloring} of $S$ is a partition
$\{S_0, S_1, \dots, S_{k-1}\}$ of $S$. A set of points $S$ with a
$k$-coloring is called a \emph{$k$-colored point set}. For each point $p \in
S_i$, $col(p)$ denotes the color $i$ of $p$. A $k$-colored point set $S$ is \emph{compatible with} a $k$-colored graph $G$ if $|V_i| = |S_i|$ for every $i$. If $G$ is planar we say that $G$
has a \emph{topological point-set embedding} on $S$ if there exists a planar drawing of $G$ such that: (i) every vertex $v$ is mapped to a distinct
point $p$ of $S$ with $col(p)=col(v)$, (ii) each edge $e$ of $G$
is drawn as simple Jordan arc. We say that $G$ has a \emph{$k$-colored point-set embedding} on $S$ if there exists a planar drawing
of $G$ such that: (i) every vertex $v$ is mapped to a distinct
point $p$ of $S$ with $col(p)=col(v)$, (ii) each edge $e$ of $G$
is drawn as a polyline $\lambda$. A point shared by any two
consecutive segments of $\lambda$ is called a \emph{bend} of $e$. The maximum number of bends along an edge is the \emph{curve complexity} of the $k$-colored point-set embedding.
A \emph{$k$-colored sequence} $\sigma$ is a sequence of (possibly
repeated) colors $c_0$, $c_1$, $\dots$, $c_{n-1}$ such that $0
\leq c_j \leq k-1$ ($0 \leq j \leq n-1$).  We say that $\sigma$ is
\emph{compatible with} a $k$-colored graph $G$ if color $i$ occurs $|V_i|$
times in $\sigma$. Let $S$ be a $k$-colored point set.  
Let $p_0, \dots, p_{n-1}$ be the points of $S$ with  $x(p_0) < \ldots < x(p_{n-1})$. The $k$-colored sequence $col(p_0), \dots col(p_{n-1})$ is called the
\emph{$k$-colored sequence induced by $S$}, and is  denoted as $seq(S)$.
A set of points $S$ is \emph{one-sided convex} if they are in convex position and the two points with minimum and maximum $x$-coordinate are consecutive along $CH(S)$. In a $k$-colored one-sided convex point set, the sequence of colors encountered clockwise along $CH(S)$, starting from the point with minimum $x$-coordinate, coincides with $seq(S)$.

A \emph{Hamiltonian cycle} of a graph $G$ is a simple
cycle that contains all vertices of $G$. A graph $G$ that admits a
Hamiltonian cycle is said to be \emph{Hamiltonian}. A planar graph
$G$ is \emph{sub-Hamiltonian} if either $G$ is Hamiltonian or $G$
can be augmented with dummy edges (but not with dummy vertices) to a
graph $\aug(G)$ that is Hamiltonian and planar. A \emph{subdivision}
of a graph $G$ is a graph obtained from $G$ by replacing each edge
by a path with at least one edge. Internal vertices on such a path
are called \emph{division vertices}. Every planar graph has a subdivision that is
sub-Hamiltonian. Let $G$ be a planar graph and let $\sub(G)$ be a sub-Hamiltonian
subdivision of $G$. The graph $\aug(\sub(G))$ is called a
\emph{Hamiltonian augmentation of $G$} and will be denoted as
$\Ham(G)$. Let $\C$ be the Hamiltonian cycle of a Hamiltonian augmentation
$\Ham(G)$ of $G$. Let $e$ be an edge of $\C$, let $\Pa = \C
\setminus e$ be the Hamiltonian path obtained by removing $e$ from
$\C$, and let $v_0,v_1,\dots,v_{n'-1}$ be the vertices of $G$ in the
order they appear along $\Pa$. Finally, let $\sigma=c_0,c_1,\dots, c_{n'-1}$ be a $k$-colored sequence. $\Pa$ is a \emph{$k$-colored
Hamiltonian path consistent with $\sigma$} if $\col(v_i)=c_i$ ($0
\leq i \leq n'-1$). $\C$ is a \emph{$k$-colored Hamiltonian cycle
consistent with $\sigma$} if there exists an edge $e \in \C$ such
that $\Pa = \C \setminus e$ is a $k$-colored Hamiltonian path
consistent with $\sigma$. $\Ham(G)$ is called a \emph{$k$-colored
Hamiltonian augmentation of $G$ consistent with $\sigma$}. The following theorem has been proved in~\cite{DBLP:journals/jgaa/GiacomoDLMTW08} (see also~\cite{DBLP:journals/tcs/BadentGL08,DBLP:journals/algorithmica/GiacomoLT10}).

\begin{theorem}\emph{\cite{DBLP:journals/jgaa/GiacomoDLMTW08}}\label{th:hamiltonian}
Let $G$ be a $k$-colored planar graph and $S$ be a $k$-colored point set consistent with $G$. If $G$ has a $k$-colored Hamiltonian augmentation consistent with $seq(S)$ and at most $d$ division vertices per edge then $G$ admits a $k$-colored point-set embedding on $S$ with at most $2d+1$ bends per edge.
\end{theorem}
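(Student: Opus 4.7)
The plan is to exploit the given Hamiltonian augmentation to place the vertices on the points in the order prescribed by the Hamiltonian path, draw the augmented graph with at most one bend per edge, and then suppress the division vertices so that each becomes a bend of the corresponding original edge of $G$.

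First, I would fix the vertex-to-point correspondence. Let $\C$ be the Hamiltonian cycle of $\Ham(G)$, let $e\in\C$ be the edge whose removal leaves a Hamiltonian path $\Pa=v_0,v_1,\dots,v_{n'-1}$ with $\col(v_i)$ equal to the $i$-th entry of $seq(S)$, and let $p_0,\dots,p_{n'-1}$ be the points of $S$ sorted by $x$-coordinate. The mapping $v_i\mapsto p_i$ respects colors by construction and is a bijection between the vertex set of $\sub(G)$ (original vertices together with division vertices) and $S$, since the consistency of $\Ham(G)$ with $seq(S)$ already encodes that the vertex counts match color by color.

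Second, I would draw $\Ham(G)$ with at most one bend per edge in the style of Kaufmann and Wiese. Draw $\Pa$ as the $x$-monotone polyline $p_0p_1\cdots p_{n'-1}$, so that each edge of $\Pa$ is a straight segment with no bends. In the planar embedding of $\Ham(G)$, the edges not lying on $\Pa$ split, together with $e$, into the edges interior to $\C$ and those exterior to $\C$; each of the two sides forms an outerplanar graph whose outer face visits $v_0,\dots,v_{n'-1}$ in that order. Route the edges of one side above $\Pa$ and those of the other side below $\Pa$, each as a polyline $p_i,q,p_j$ with a single bend $q$, choosing the heights of the $q$-points in a stack-like fashion consistent with the outerplanar nesting. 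This yields a planar drawing of $\Ham(G)$ in which every edge has at most one bend. Finally, I would delete the edges added by $\aug$ (which are not edges of $G$) to obtain a drawing of $\sub(G)$ with the same bend bound.

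To conclude, I would suppress the division vertices. An edge of $G$ with $d'\le d$ division vertices appears in $\sub(G)$ as a path of $d'+1$ sub-edges; each sub-edge contributes at most one bend from the previous step, and each of the $d'$ division vertices becomes an additional bend when absorbed into the polyline of the original edge. Hence the total number of bends per original edge is at most $(d'+1)+d'=2d'+1\le 2d+1$, as claimed. The main obstacle is really the planarity of the one-bend routing in the second step, i.e.\ placing the bend points on arbitrary (non-convex) point sets without producing crossings; this rests on the fact that the edges on each side of $\C$ form an outerplanar graph whose outer-face order matches the $x$-order of $S$, so their nesting structure lets the bend heights be chosen monotonically, much as in a standard book embedding. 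Once this routing is in place, the rest of the proof is a straightforward bookkeeping on the subdivision.
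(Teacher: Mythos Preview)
The paper does not prove this theorem; it is quoted verbatim from~\cite{DBLP:journals/jgaa/GiacomoDLMTW08} and used as a black box, so there is nothing in the paper to compare against directly. Your outline is the standard argument behind that reference: order the vertices along the Hamiltonian path, draw each edge of $\Ham(G)$ with a single bend above or below the spine according to the side of $\C$ it lies on, delete the augmenting edges, and turn every division vertex into a bend, yielding the $2d+1$ count.

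There is, however, one genuine slip in your first step. You claim that $v_0,\dots,v_{n'-1}$ together with the division vertices are in bijection with the points $p_0,\dots,p_{n'-1}$ of $S$. They are not: by definition $S$ is compatible with $G$, so $|S|=|V(G)|$, while $\sub(G)$ has strictly more vertices whenever $d>0$. In the paper's definition, the sequence $v_0,\dots,v_{n'-1}$ lists only the vertices \emph{of $G$} in the order they appear on $\Pa$; the division vertices are interleaved between them but are not part of this list and are not assigned to points of $S$. If you send division vertices to points of $S$ as you propose, then after suppression some real vertices of $G$ end up at points of the wrong color (or at no point at all), and the result is not a $k$-colored point-set embedding. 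The fix is the standard one: map the real vertices $v_i$ to the $p_i$ as you do, and place each division vertex at a \emph{fresh} point on the spine, inserted between the two points of $S$ that bracket it in the Hamiltonian order. These fresh points then correctly become bends in your last step, and your $(d'+1)+d'=2d'+1$ count goes through unchanged.
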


The next lemma can be easily derived from Theorem~\ref{th:hamiltonian}. 

\begin{lemma}\label{le:convex}
Let $G$ be a $k$-colored graph, and $S$ be a $k$-colored one-sided convex point set compatible with $G$. If $G$ has a topological $k$-colored point-set embedding on $S$ such that each edge crosses $CH(S)$ at most $b$ times, then $G$ admits a $k$-colored point-set embedding on $S$ with at most $2b+1$ bends per edge.  
\end{lemma}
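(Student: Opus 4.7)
The plan is to construct from the given topological embedding a Hamiltonian augmentation of $G$ with at most $b$ division vertices per edge whose Hamiltonian cycle induces the order $seq(S)$ on the original vertices of $G$, and then to invoke Theorem~\ref{th:hamiltonian} with $d=b$ to obtain the claimed bound of $2b+1$ bends per edge.

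Starting from the given topological $k$-colored point-set embedding of $G$ on $S$, I would introduce a division vertex at every crossing of an edge of $G$ with the convex hull $CH(S)$. Since by hypothesis each edge of $G$ crosses $CH(S)$ at most $b$ times, the resulting subdivision $\sub(G)$ has at most $b$ division vertices per edge of $G$. The original vertices of $S$ together with these new division vertices lie on $CH(S)$ in some cyclic order; treating the sub-arcs of $CH(S)$ between consecutive such points as new edges turns $CH(S)$ into a Hamiltonian cycle $\C$ of the graph $\sub(G) \cup \C$. This drawing is planar because every crossing of an edge of $G$ with $CH(S)$ has been converted into a division vertex and $CH(S)$ is itself a simple closed curve. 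Hence $\sub(G)$ is sub-Hamiltonian and $\sub(G) \cup \C$ is a Hamiltonian augmentation $\Ham(G)$ of $G$ with at most $b$ division vertices per edge.

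It then remains to verify consistency with $seq(S)$. This is where one-sided convexity is used: the points $p_{\min}$ and $p_{\max}$ of $S$ with minimum and maximum $x$-coordinate are adjacent on $CH(S)$, so the short hull arc joining them contains no other point of $S$. Removing from $\C$ an edge lying on this short arc therefore yields a Hamiltonian path $\Pa$ along which the original vertices of $G$ appear in left-to-right order, i.e.\ in the order $seq(S)$. Theorem~\ref{th:hamiltonian} then delivers a $k$-colored point-set embedding of $G$ on $S$ with at most $2b+1$ bends per edge.

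The step I expect to be the most delicate is the planarity check for $\sub(G) \cup \C$, which nevertheless follows directly from how the division vertices have been placed. A minor bookkeeping point is handling any division vertices that happen to lie on the short hull arc between $p_{\min}$ and $p_{\max}$: the cut used to produce $\Pa$ should be chosen as a single sub-segment of that arc so that no division vertex is severed from $\Pa$, a choice that does not affect the count of division vertices per edge of $G$.
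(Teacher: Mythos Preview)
Your proposal is correct and follows essentially the same approach as the paper's own proof: replace each crossing of an edge with $CH(S)$ by a division vertex, let the boundary of $CH(S)$ supply the Hamiltonian cycle, use one-sided convexity to conclude that the real vertices occur along this cycle in the order $seq(S)$, and then apply Theorem~\ref{th:hamiltonian} with $d=b$. The paper's argument is just a terser version of yours; your additional remarks on planarity and on cutting the cycle along the short hull arc are sound and do not diverge from the paper's reasoning.
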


Let $G=(V,E)$ be a planar graph. A \emph{topological book embedding} of $G$ is a planar drawing such that all vertices of $G$ are represented as points of a horizontal line  $\ell$, called the \emph{spine}. Each of the half-planes defined by $\ell$ is a \emph{page}. Each edge of a topological book embedding is either in the top page, or completely in the bottom page, or it can be on both pages, in which case it crosses the spine. Each crossing between an edge and the spine is called a \emph{spine crossing}. It is also assumed that in a topological book embedding every edge consists of one or more circular arcs, such that no two consecutive arcs are in the same page\footnote{The more general concept of \emph{$h$-page topological book embedding} exists, where each arc can be drawn on one among $h$ different pages. For simplicity we use the term topological book embedding to mean $2$-page topological book embedding.}. 
Let $G$ be a $k$-colored graph and let $\sigma$ be a $k$-colored sequence compatible with $G$. A topological book embedding of $G$ is consistent with $\sigma$ if the sequence of vertex colors along the spine coincides with $\sigma$. Let $S$ be a $k$-colored point set compatible with a $k$-colored planar graph $G$ and let $seq(S)$ be the $k$-colored sequence induced by $S$. The following lemma can be proved similarly to Lemma~\ref{le:convex}.

\begin{lemma}\label{le:top-pse}
If $G$ admits a topological book embedding consistent with $seq(S)$ and having at most $h$ spine crossing per edge, then $G$ admits a point-set embedding on $S$ with curve complexity at most $2h +1$.
\end{lemma}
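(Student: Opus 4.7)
The plan is to mirror the derivation of Lemma~\ref{le:convex} from Theorem~\ref{th:hamiltonian}, using the spine of the book embedding as the Hamiltonian backbone in place of the convex hull. Given a topological book embedding of $G$ consistent with $seq(S)$ in which every edge meets the spine in at most $h$ interior points, I would first place a new division vertex at each spine crossing. This turns the drawing into a planar drawing of a subdivision $\sub(G)$ of $G$ in which every vertex (original or division) lies on the spine $\ell$, every edge is a single arc lying strictly inside one of the two pages and touching $\ell$ only at its endpoints, and each original edge carries at most $h$ division vertices.

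The next step would be to augment $\sub(G)$ so as to make it Hamiltonian. For every pair of consecutive vertices along $\ell$ I would add the corresponding spine segment as a dummy edge, and then add one further dummy edge, drawn in the outer face, joining the leftmost and rightmost vertices on the spine. Since the arcs of $\sub(G)$ meet $\ell$ only at their endpoints, the spine segments introduce no crossings; and since the outer face is a single unbounded region, the closing edge can also be added planarly. The resulting planar graph $\Ham(G)$ has a Hamiltonian cycle $\C$ consisting of the spine path together with the closing arc; removing this closing arc yields a Hamiltonian path $\Pa$ that visits the vertices of $\sub(G)$ in left-to-right order along $\ell$. In particular, the original vertices of $G$ appear along $\Pa$ in the order $seq(S)$, because by hypothesis this is the sequence of their colors on the spine.

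At that point $\Ham(G)$ is a $k$-colored Hamiltonian augmentation of $G$ consistent with $seq(S)$ with at most $h$ division vertices per edge, so Theorem~\ref{th:hamiltonian} immediately delivers a $k$-colored point-set embedding of $G$ on $S$ with at most $2h+1$ bends per edge, which is the claim. The only nontrivial point in the plan is verifying that the augmentation step preserves planarity: I would check explicitly that distinct edges do not share a spine crossing point (so that subdivision is well defined) and that the interior of each spine segment between consecutive spine vertices is disjoint from all arcs; both facts follow directly from the defining property that the topological book embedding is itself a planar drawing. Everything else is a verbatim translation of the argument that converts a one-sided convex topological embedding into a point-set embedding via Theorem~\ref{th:hamiltonian}.
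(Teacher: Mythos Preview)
Your proposal is correct and follows exactly the approach the paper intends: the paper simply states that Lemma~\ref{le:top-pse} ``can be proved similarly to Lemma~\ref{le:convex},'' and your argument is precisely that translation---subdivide at spine crossings, thread a Hamiltonian cycle along the spine (plus the closing edge you correctly add, since the spine is a path rather than a cycle), and invoke Theorem~\ref{th:hamiltonian}. The extra care you take with planarity of the augmentation is more explicit than what the paper writes, but the underlying idea is identical.
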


\section{Point-set Embeddings of Stars}\label{se:lower}

In this section we establish that a $3$-colored point-set embedding of a forest of three stars may require $\Omega(n^{\frac{1}{3}})$ bends along $\Omega(n^{\frac{2}{3}})$ edges by exploiting a previous result about biconnected outerplanar graphs.  We start by recalling the result in~\cite{DBLP:journals/jgaa/GiacomoDLMTW08}.
An {\em alternating point set} $S_n$ is a $3$-colored one-sided convex point set such that: (i) $S_n$ has $n$ points for each color $0$, $1$, and $2$, and (ii) when going along the convex hull $CH(S_n)$ of $S_n$ in clockwise order, the sequence of colors encountered is $0,1,2,0,1,2,\ldots$. Each set of consecutive points colored $0,1,2$ is called a \emph{triplet}. 

\setlength\intextsep{0pt}
\setlength{\columnsep}{5pt}%
\begin{wrapfigure}{I}{0.4\textwidth}
	\centering
	\includegraphics[width=0.24\textwidth]{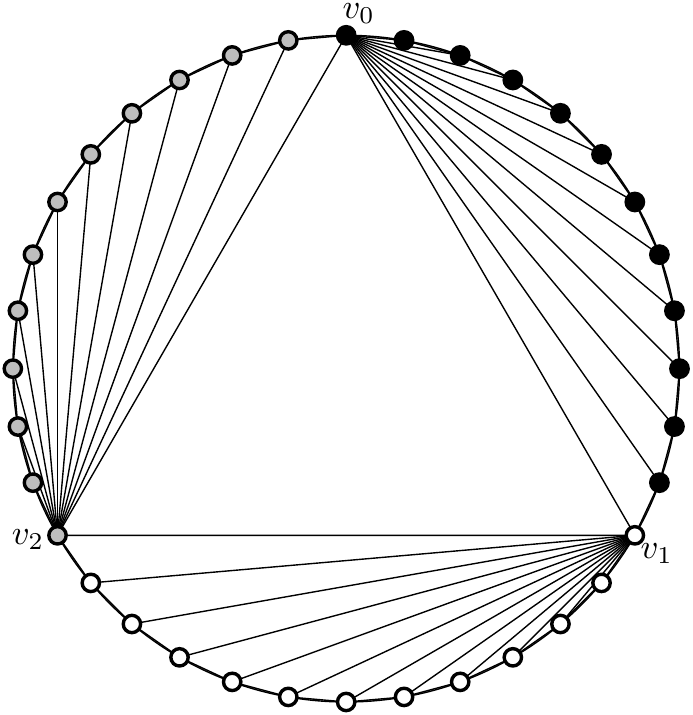}
	\caption{\label{fi:Gn}A $3$-fan $G_n$ for $n=12$.}
\end{wrapfigure}

A {\em $3$-fan}, denoted as $G_n$, is a $3$-colored outerplanar graph with $3n$ vertices ($n \geq 2$) and defined as follows. $G_n$ consists of a simple cycle formed by $n$ vertices of color $0$, followed (in the counterclockwise order) by $n$ vertices of color $1$, followed by $n$ vertices of color $2$.
The vertex of color $i$ adjacent in the cycle to a vertex of color $i-1$ (indices taken modulo $3$) is denoted as $v_i$. 
Also, in $G_n$ every vertex colored $i$ is adjacent to $v_i$ ($i=0,1,2$) and vertices $v_0,v_1,v_2$ form a $3$-cycle of $G_n$. See, e.g. Fig.~\ref{fi:Gn}. 
The following theorem has been proved in~\cite{DBLP:journals/jgaa/GiacomoDLMTW08}.

\begin{theorem}\emph{\cite{DBLP:journals/jgaa/GiacomoDLMTW08}}\label{th:gn}
Let $h$ be a positive integer and let $G_n$ be a $3$-fan for $n \geq 79 h^3$, and let $S_n$ be an alternating point set compatible with $G_n$. In every $3$-colored point-set embedding of $G_n$ on $S_n$ there is one edge with more than $h$ bends.
\end{theorem}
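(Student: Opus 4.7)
The plan is to argue by contradiction. Suppose that $G_n$ admits a $3$-colored point-set embedding $\Gamma$ on the alternating point set $S_n$ in which every edge uses at most $h$ bends, so that each edge is a polyline with at most $h+1$ straight-line segments.

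The first step is to examine how the three central vertices $v_0, v_1, v_2$ and the inner triangle $T = v_0 v_1 v_2$ are placed in $\Gamma$. Since $S_n$ is one-sided convex, the three points hosting $v_0, v_1, v_2$ lie on $CH(S_n)$ and split it into three arcs; as $S_n$ is the union of $n$ color triplets $(0,1,2)$, a pigeonhole argument yields one arc $A$ containing $\Omega(n)$ full triplets, hence $\Omega(n)$ points of each color. In the combinatorial planar embedding of $G_n$ the triangle $T$ separates the three fan regions, so all the fan edges going to the points in $A$ must be routed on the same side of $T$.

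The heart of the argument is then a topological interleaving analysis. Each color-$i$ point in $A$ is connected to $v_i$ by a fan edge; because the colors along $A$ alternate $0,1,2,0,1,2,\dots$, these edges form three interleaved ``pencils'' emanating from $v_0, v_1, v_2$, which must coexist planarly on one side of $T$. Following the spirit of the proof in~\cite{DBLP:journals/jgaa/GiacomoDLMTW08}, I would show that realising all these interleavings forces $\Omega(n^{4/3})$ total bends in $\Gamma$: each ``interleaving incidence'' between two pencils can be charged against either a bend of one of the involved edges or a crossing of a straight chord of $CH(S_n)$, and the latter is in turn bounded by bends because every edge is a polyline with at most $h+1$ segments (in the style of the converse of Lemma~\ref{le:convex}).

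Finally, since $G_n$ has $\Theta(n)$ edges and the hypothesis bounds the bends per edge by $h$, the total bend budget is $O(n h)$. Combining $O(nh) \geq \Omega(n^{4/3})$ yields $h = \Omega(n^{1/3})$, which contradicts $n \geq 79 h^3$ for the appropriate hidden constant, so at least one edge of $\Gamma$ must use more than $h$ bends. The main obstacle I anticipate is precisely the middle step: rigorously formalising the ``interleaving incidence'' count on the arc $A$ and matching it against the bend budget to yield the sharp constant $79$ in the threshold $n \geq 79 h^3$. This is a delicate topological/crossing-number argument localised to the arc $A$, the triangle $T$, and the three fans rooted at $v_0, v_1, v_2$, and it is the place where the cubic dependence $n = \Theta(h^3)$ enters.
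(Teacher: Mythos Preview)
The paper does not actually prove this theorem: it is quoted verbatim from~\cite{DBLP:journals/jgaa/GiacomoDLMTW08} and used as a black box in the proof of Lemma~\ref{le:lower-bound}. So there is no ``paper's own proof'' to compare your proposal against.

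As for your proposal on its own merits, it is a plausible high-level outline but not a proof. You correctly isolate the arc $A$ carrying $\Omega(n)$ triplets and the three interleaved pencils rooted at $v_0,v_1,v_2$; however, everything hinges on the assertion that ``realising all these interleavings forces $\Omega(n^{4/3})$ total bends,'' and you do not carry this out --- you explicitly flag it as the main obstacle. Two more concrete worries: (i) your claim that the triangle $T$ forces all fan edges toward $A$ onto the same side of $T$ presupposes a fixed planar embedding of $G_n$, but a point-set embedding need not realise the canonical outerplanar embedding, so this step needs a separate argument (or a different way to set up the separation); (ii) appealing to ``the converse of Lemma~\ref{le:convex}'' only converts bends to boundary crossings, it does not by itself deliver the $n^{4/3}$ count you need. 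Getting the exact threshold $79h^3$ requires a sharp, constant-tracking version of the interleaving/crossing count, which the original paper presumably supplies and which your sketch leaves open.
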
 

The forest of stars that we use to establish our lower bound is called a {\em $3$-sky} and is  denoted by $F_n$. It consists of three stars $T_0$, $T_1$, $T_2$ such that: (i) each $T_i$ ($i=0,1,2$) has $n$ vertices ($n \geq 2$); (ii) all the vertices of each $T_i$ ($i=0,1,2$) have the same color $i$. 

Let $\Gamma_n$ be a point-set embedding of $F_n$ on $S_n$. An {\em uncrossed triplet} of $\Gamma_n$ is a triplet $p_i, p_{i+1}, p_{i+2}$ of points of $S_n$ such that, when moving along $CH(S_n)$ in clockwise order, no edge of $\Gamma_n$ crosses $CH(S_n)$ between $p_i$ and $p_{i+1}$ and between $p_{i+1}$ and $p_{i+2}$. A triplet is \emph{crossed $k$ times} if the total number of times that $CH(S_n)$ is crossed by some edges between $p_i$ and $p_{i+1}$ and between $p_{i+1}$ and $p_{i+2}$ is $k$. A {\em leaf triplet} of $\Gamma_n$ is a triplet of $S_n$ whose points represent leaves of $F_n$. Analogously, a \emph{root triplet} is a triplet of $S_n$ whose points represent the three roots of $F_n$. The following lemma establishes the first relationship between the curve complexity of some special types of $3$-colored point-set embeddings of $F_n$ and those of a $3$-fan $G_n$. 

\begin{lemma}\label{le:uncrossed-triplet}
Let $F_n$ be a $3$-sky, $S_n$ be an alternating point set compatible with $F_n$, and $\Gamma_n$ be a $3$-colored topological point-set embedding of $F_n$ on $S_n$. If $\Gamma_n$ has an uncrossed leaf triplet and each edge of $\Gamma_n$ crosses $CH(S_n)$ at most $b$ times, then the $3$-fan $G_n$ has a $3$-colored topological point-set embedding on $S_n$ such that each edge crosses $CH(S_n)$ at most $3b + 2$ times.
\end{lemma}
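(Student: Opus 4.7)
The plan is to turn the given topological embedding $\Gamma_n$ of $F_n$ into a topological embedding of the $3$-fan $G_n$ on the same point set $S_n$, using the star edges of $\Gamma_n$ and the uncrossed leaf triplet as routing channels. First, I would identify each star root $r_i$ of $T_i$ with the central vertex $v_i$ of $G_n$; this is color-consistent, and the star edges of $\Gamma_n$ then already realize all fan edges of $G_n$ (i.e., the edges from $v_i$ to each other color-$i$ vertex), inheriting the bound of at most $b$ hull crossings. Let $l_0, l_1, l_2$ denote the leaves of the uncrossed triplet, one per color.

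Next, I would assign the Hamiltonian cycle labels $u_1^{(i)}, \dots, u_{n-1}^{(i)}$ to the color-$i$ leaves of $T_i$ so that these labels follow the cyclic order of the star edges around $r_i$ in $\Gamma_n$; any rotation of these labels along the color-$i$ segment of the Hamiltonian cycle yields an isomorphic instance of $G_n$, so this is admissible. The edges of $G_n$ missing from $\Gamma_n$ are then drawn in three groups: (i) each intra-color Hamiltonian cycle edge $u_j^{(i)} u_{j+1}^{(i)}$ is routed as a curve in a thin corridor alongside the two star edges $r_i u_j^{(i)}$ and $r_i u_{j+1}^{(i)}$, contributing at most $2b$ hull crossings; (ii) each triangle edge $v_i v_{i+1}$ is routed alongside $r_i l_i$, then exits and re-enters the hull through the uncrossed arc between $l_i$ and $l_{i+1}$, and finally runs alongside $l_{i+1} r_{i+1}$, contributing at most $b + 2 + b = 2b + 2$ crossings; (iii) each inter-color cycle edge $u_{n-1}^{(i)} v_{i+1}$ is routed alongside $r_i u_{n-1}^{(i)}$, then alongside $r_i l_i$, then exits and re-enters the uncrossed arc between $l_i$ and $l_{i+1}$, and finally runs alongside $l_{i+1} r_{i+1}$, contributing at most $3b + 2$ crossings. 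Together with the inherited fan edges, every edge of the resulting drawing then crosses $CH(S_n)$ at most $3b+2$ times, as required.

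The main obstacle is verifying that the combined drawing is planar. Inside the hull, each new curve runs in a thin corridor alongside existing star edges, and the cyclic labeling around each $r_i$ is chosen so that the corridors pair up consecutively and do not mutually cross; the turns that the triangle and inter-color cycle edges make at $r_i$ fit in the wedge between the star edge $r_i l_i$ and its neighbor in the cyclic order, which can be made free of other fan edges by the freedom in the labeling. Outside the hull, the at most six new curves that cross the uncrossed arcs can be placed at distinct transverse points along each arc and at distinct depths outside the hull, so they avoid each other; they cannot interfere with $\Gamma_n$ because the two uncrossed arcs are by definition free of crossings with $\Gamma_n$, and the outer region immediately adjacent to them lies in a single face of the subdivision of the unbounded region induced by $\Gamma_n$. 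The technically delicate step is the simultaneous choice of compatible cyclic orderings at the three $r_i$'s and of the drawing styles through the triplet, but no combinatorial obstruction arises from these local choices.
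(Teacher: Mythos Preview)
Your proposal is correct and follows essentially the same approach as the paper: identify the star roots with the fan centers $v_i$, inherit the star edges as fan edges, route the intra-color cycle edges in corridors between consecutive star edges ($\le 2b$ crossings), and route the triangle and inter-color cycle edges through the uncrossed triplet by running alongside star edges to the triplet, popping outside the hull there, and continuing alongside another star edge. The only cosmetic difference is that the paper first draws the triangle $q_0q_1q_2$ (at $\le 2b+2$ crossings each) and then routes each inter-color cycle edge alongside one triangle edge plus one star edge ($\le (2b+2)+b$), whereas you route the inter-color edge directly along three star edges plus the uncrossed arc ($\le 3b+2$); both yield the same bound and require the same kind of local planarity bookkeeping near the roots and near the triplet that you correctly flag as the delicate step.
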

\begin{proof}
We show how to use $\Gamma_n$ to construct a topological point-set embedding of the $3$-fan $G_n$ on $S_n$ with at most $3b + 2$ crossings of $CH(S_n)$ per edge.

Let $p_j, p_{j+1}, p_{j+2}$ be an uncrossed leaf triplet. Every point of the triplet represents a leaf of a different star (because they have different color).  Denote by $q_i$ the point of $\Gamma_n$ representing the root of $T_i$ ($i=0,1,2$) and denote by $e_i$ the edge connecting $q_i$ to $p_{j+i}$. The idea is to connect the three points $q_0,q_1,q_2$ with a $3$-cycle that does not cross any existing edges. For each edge $e_i$ we draw two curves that from $q_i$ run very close to $e_i$ until they reach $CH(S_n)$. The two curves are drawn on the same side of $e_i$ such that they are consecutive in the circular order of the edges around $q_i$ (see Fig.~\ref{fi:cycle0} for an illustration). These two curves do not intersect any existing edges and cross $CH(S_n)$ the same number of times as $e_i$. The six drawn curves are now suitably connected to realize a cycle $C$ connecting $q_0,q_1,q_2$. Depending on which side the various curves reach $CH(S_n)$, the connections are different. However in all cases we can connect two curves to form a single edge by crossing $CH(S_n)$ at most two additional times and without violating planarity (see Fig.~\ref{fi:cycle1},~\ref{fi:cycle2}, and~\ref{fi:cycle3}). Thus, we have added to $\Gamma_n$ three edges $e'_i$ connecting $q_i$ to $q_{i+1}$ (indices taken modulo $3$), each crossing $CH(S_n)$ at most $2b+2$ times. Also, since the two curves that follow an edge $e_i$ are both drawn on the same side of $e_i$, the cycle $C$ does not have any vertices inside. Notice that, depending on the case with respect to the connection of the curves, $q_0$, $q_1$, and $q_2$ appear along $C$ either in the clockwise or in the counterclockwise order. W.l.o.g. we assume that the clockwise order is  $q_0$, $q_1$, and $q_2$.

\begin{figure}[tb]
	\centering
	\begin{minipage}[b]{.24\textwidth}
		\centering
		\includegraphics[width=\textwidth]{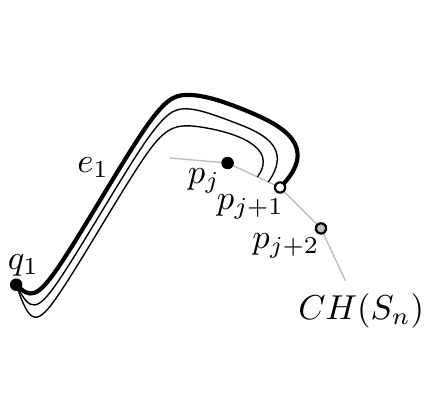}
		\subcaption{}\label{fi:cycle0}
	\end{minipage}
	\hfil
	\begin{minipage}[b]{.24\textwidth}
		\centering
		\includegraphics[width=\textwidth]{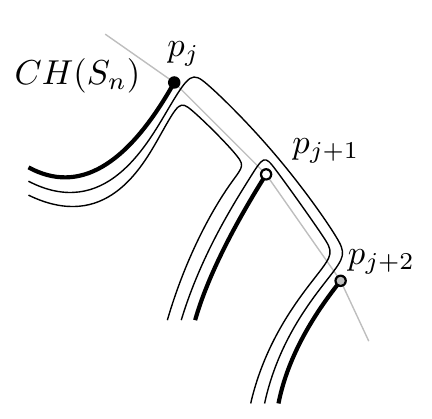}
		\subcaption{}\label{fi:cycle1}
	\end{minipage}
	\hfil
	\begin{minipage}[b]{.24\textwidth}
		\centering
		\includegraphics[width=\textwidth]{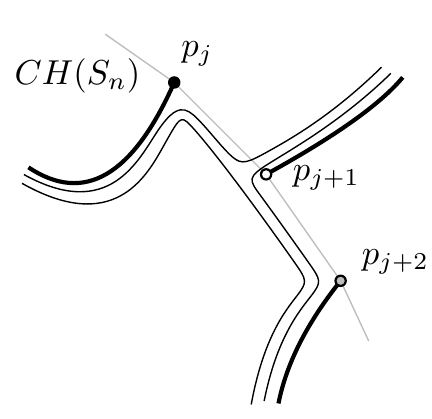}
		\subcaption{}\label{fi:cycle2}
	\end{minipage}
	\hfil
	\begin{minipage}[b]{.24\textwidth}
		\centering
		\includegraphics[width=\textwidth]{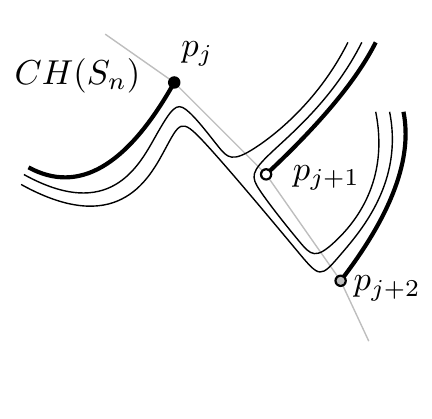}
		\subcaption{}\label{fi:cycle3}
	\end{minipage}	
	\caption{Insertion of a cycle connecting $q_0$, $q_1$ and $q_2$. (a) Drawing of the curves following the edge $e_1$. (b), (c), and (d) Connection of the six curves to form a cycle.\label{fi:cycle}}
\end{figure}

The obtained drawing is not yet a topological point-set embedding of $G_n$ because the cycle $C'$ connecting all the vertices is missing. We first add edges connecting leaves of the same color. Let $e'_i=\overline{e}_0,\overline{e}_1,\dots,\overline{e}_{n-2}=e'_{i-1}$ be the edges incident to $q_i$ in the circular order around $q_i$ (this is the counterclockwise order under our assumption that $q_0$, $q_1$, and $q_2$ are located in the clockwise order along $C$). We add an edge between the leaf of $\overline{e}_j$ and the leaf of $\overline{e}_{j+1}$ (for $j=0,1,\dots,n-3$) as follows. Starting from the leaf of $\overline{e}_j$, we draw a curve following the edge $\overline{e}_j$ until we arrive very close to $q_{i}$ and then we follow $\overline{e}_{j+1}$ until we reach the leaf of $\overline{e}_{j+1}$. The added edges do not cross any existing edges and cross $CH(S_n)$ a number of times equal to the number of times that $\overline{e}_j$ crosses $CH(S_n)$ plus the number of times that $\overline{e}_{j+1}$ crosses $CH(S_n)$, so at most $2b$.

It remains to add the edges of $C'$ connecting vertices of different colors. There are three such edges and they connect vertex $v_i$ ($i=0,1,2$) of $G_n$ to a vertex of color $i+1$ (indices taken modulo $3$). We add an edge connecting $q_i$ to a leaf of color $i+1$ as follows. Let $e''_{i+1}$ be the edge incident to $q_{i+1}$ that follows $e'_i$ in the clockwise order around $q_{i+1}$ (this is an edge connecting $q_{i+1}$ to a leaf of color $i+1$). Starting from $q_i$ we draw a curve following the edge $e'_i$ until we arrive very close to $q_{i+1}$ and then we follow $e''_{i+1}$ until we reach the leaf of $e''_{i+1}$. The constructed curve connects $q_{i}$ to a leaf  
of color $i+1$ and does not cross any existing edge. It crosses $CH(S_n)$ at most the number of times that $e'_{i+1}$ crosses $CH(S_n)$ (that is $2b+2$) plus the number of times that $e''_{i+1}$ crosses $CH(S_n)$ (that is $b$). Thus the total number of crossing of $CH(S_n)$ is at most $3b+2$.\qed 
\end{proof}

The next two lemmas explain how to obtain a $3$-colored topological book embedding that satisfies Lemma~\ref{le:uncrossed-triplet}.

\begin{lemma}\label{le:crossed-triplet}
Let $F_n$ be a $3$-sky, $S_n$ be an alternating point set compatible with $F_n$, and $\Gamma_n$ be a $3$-colored topological point-set embedding of $F_n$ on $S_n$ with a root triplet. If $\Gamma_n$ has a leaf triplet $\tau$ that is crossed $c$ times ($c < n$) and each edge crosses $CH(S_{n})$ at most $b$ times, then there exists a $3$-sky $F_{n'}$ which is a  subgraph of $F_n$ and an alternating point set $S_{n'}$ which is a subset of $S_n$ such that: (i) $n' \geq n-c$; (ii) there exists a $3$-colored topological point-set embedding $\Gamma_{n'}$ of $F_{n'}$ on $S_{n'}$ such that each edge crosses $CH(S_{n'})$ at most $b+1$ times; (iii) $\tau$ is an uncrossed leaf triplet of $\Gamma_{n'}$.
\end{lemma}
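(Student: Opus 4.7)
The plan is to remove at most $c$ well-chosen triplets from $S_n$ (and the corresponding leaves from $F_n$) in order to kill all but at most three of the crossings near $\tau$, and then to locally redraw at most three remaining edges of $\Gamma_n$ that are incident to leaves in $\tau$.

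First, I would identify the set $E_\tau$ of edges of $\Gamma_n$ that contribute to the $c$ crossings inside the intervals of $\tau$; since each crossing is caused by one edge, $|E_\tau| \le c$. Because $F_n$ is a forest of three stars, every $e \in E_\tau$ has a leaf endpoint $\ell(e)$. I would split $E_\tau$ into $E_\tau^{\mathrm{out}} = \{e \in E_\tau : \ell(e) \notin \tau\}$ and $E_\tau^{\mathrm{in}} = \{e \in E_\tau : \ell(e) \in \tau\}$; note that $|E_\tau^{\mathrm{in}}| \le 3$ because $\tau$ contains only three leaves, one per color. For each $e \in E_\tau^{\mathrm{out}}$, the triplet $T(e)$ of $S_n$ containing $\ell(e)$ is distinct both from $\tau$ and from the root triplet (the latter because $\ell(e)$ is a leaf, not a root). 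Let $\mathcal{R} = \{T(e) : e \in E_\tau^{\mathrm{out}}\}$, so $|\mathcal{R}| \le c$. Setting $S_{n'} = S_n \setminus \bigcup_{T \in \mathcal{R}} T$ and letting $F_{n'}$ be the graph obtained from $F_n$ by deleting the leaves lying at the removed points produces an alternating point set compatible with a $3$-sky having $n' = n - |\mathcal{R}| \ge n - c$ vertices per star, in which the root triplet and $\tau$ both survive.

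Next, I would construct $\Gamma_{n'}$ by taking the restriction of $\Gamma_n$ to $F_{n'}$ as a starting point. After the deletion, no edge of $E_\tau^{\mathrm{out}}$ remains, so the only edges that can still cross $CH(S_{n'})$ in the intervals of $\tau$ are the (at most three) edges of $E_\tau^{\mathrm{in}}$. For each such edge $e$, I would redraw it as a Jordan arc from its root (still present because the root triplet is not in $\mathcal{R}$) to its leaf in $\tau$, routed close to the boundary of $CH(S_{n'})$ on the interior side and approaching the root along a neighboring edge of $\Gamma_{n'}$. The re-routing exploits the room freed up by the removal of $\mathcal{R}$ and the planarity of $\Gamma_n$ around the root triplet so that it avoids creating any new edge-edge crossings; it crosses $CH(S_{n'})$ at most one additional time compared to the original drawing.

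The hard part will be making the re-routing step rigorous: one needs to show that the (at most three) new arcs can be chosen as simple Jordan arcs that (i) do not intersect any other edge of $\Gamma_{n'}$, (ii) do not cross $CH(S_{n'})$ inside the intervals of $\tau$, and (iii) contribute at most one extra crossing each with $CH(S_{n'})$. For the edges of $\Gamma_{n'}$ that are not re-routed, the drawing is unchanged and the only issue is that their crossing count against the new hull could in principle differ from their count against $CH(S_n)$ in the lens-shaped regions cut off by the deletion of triplets in $\mathcal{R}$; I would argue (performing a local smoothing inside each such lens region, if necessary) that the count remains at most $b$. Combined with the bound $b+1$ for the re-routed edges, this yields the required $b+1$ bound per edge, while $\tau$ becomes uncrossed by construction.
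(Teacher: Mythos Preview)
Your overall decomposition matches the paper exactly: remove the triplet containing the leaf of each edge in $E_\tau^{\mathrm{out}}$ (the root triplet guarantees that no root is deleted and that $\tau$ itself survives), leaving only the at most three edges of $E_\tau^{\mathrm{in}}$ to repair. The gap is in how you carry out that repair.

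You propose to redraw each edge of $E_\tau^{\mathrm{in}}$ globally, ``routed close to the boundary of $CH(S_{n'})$ on the interior side and approaching the root along a neighboring edge.'' As stated this does not work: many surviving edges of $\Gamma_{n'}$ cross the hull boundary, so there is no free corridor along it, and a completely rerouted arc has no a~priori reason to inherit the bound $b$ on hull crossings from the discarded original (your phrase ``one additional time compared to the original drawing'' presupposes that most of the original arc is kept, which your description does not ensure). The paper instead performs a purely \emph{local} surgery at $\tau$: once $E_\tau^{\mathrm{out}}$ has been removed, one can draw a small closed curve $C$ around the three points of $\tau$ that is crossed only by the three edges incident to $\tau$; cut each of these three edges at its first intersection with $C$ and reconnect the three cut points to the three points of $\tau$ inside $C$ so that none of the new arcs crosses the two hull intervals of $\tau$. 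Because everything outside $C$ is untouched, each edge keeps its old number of hull crossings outside $C$ and gains at most one inside $C$, which yields the $b+1$ bound directly and makes planarity trivial (only three arcs live inside $C$). Your concern about the lens-shaped regions created by triplet removal is legitimate---the paper is terse there, simply asserting the count ``remains $b$''---but it is a side issue, handled by the local smoothing you already outline.
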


\begin{lemma}\label{le:root-triplet}
Let $F_n$ be a $3$-sky, $S_n$ be an alternating point set compatible with $F_n$, and $\Gamma_n$ be a $3$-colored topological point-set embedding of $F_n$ on $S_n$. If each edge of $\Gamma_n$ crosses $CH(S_n)$ at most $b$ times, then there exists a $3$-sky $F_{n'}$ which is a subgraph of $F_n$ and an alternating point set $S_{n'}$ which is a  subset of $S_n$ such that: (i) $n' \geq \frac{n}{3}-3$; (ii) there exists a $3$-colored topological point-set embedding $\Gamma_{n'}$ of $F_{n'}$ on $S_{n'}$ such that each edge crosses $CH(S_{n'})$ at most $b+2$ times; (iii) $\Gamma_{n'}$ has a root triplet.
\end{lemma}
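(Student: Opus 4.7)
The plan is to construct $S_{n'}$ by discarding two of the three arcs into which the three roots partition $CH(S_n)$, so that the three roots become three cyclically consecutive points of $S_{n'}$ and thus form a root triplet. Let $r_0,r_1,r_2$ denote the three root points in $\Gamma_n$. These three points split $CH(S_n)$ into three open arcs whose leaves sum to $3n-3$, so by pigeonhole the largest arc $B$ contains at least $n-1$ leaves. Because $S_n$ is alternating and the three roots have colors $0,1,2$, the three arc sizes have predictable residues modulo $3$: when the cyclic order of the roots on $CH(S_n)$ matches the alternating cycle $(0,1,2)$ the arcs have sizes divisible by $3$, and when it is reversed they are each congruent to $1$ modulo $3$.

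I would then set $S_{n'}=\{r_0,r_1,r_2\}\cup(\text{leaves of }B)$, choosing $B$ so that the three retained roots end up as three consecutive points of $S_{n'}$ in a cyclic color order consistent with the alternating pattern, up to one final local adjustment (discarding a single leaf adjacent to the boundary of $B$) to restore alternation in the reversed case. The pigeonhole bound gives $|S_{n'}|\geq n+1$ (with equality only after the adjustment), hence $n'\geq n/3-3$. Define $F_{n'}$ to be the subgraph of $F_n$ induced by the vertices whose $\Gamma_n$-images lie in $S_{n'}$; since the three original roots are retained, each $T_i'=T_i\cap F_{n'}$ is a star centered at $r_i$, and $F_{n'}$ is a $3$-sky with $n'$ vertices per star. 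Take $\Gamma_{n'}$ to be the restriction of $\Gamma_n$ to the surviving edges, locally rerouted near the dropped arcs; by construction, $r_0,r_1,r_2$ are three consecutive points of $S_{n'}$ whose colors form a cyclic rotation of $(0,1,2)$, yielding the required root triplet, i.e., (iii).

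Bounding the crossings is where the ``$+2$'' arises. The convex hull $CH(S_{n'})$ is obtained from $CH(S_n)$ by cutting off exactly two ``pockets'', one per discarded arc, each bounded by a portion of the original $CH(S_n)$-boundary and a single newly exposed chord of $CH(S_{n'})$. An edge of $\Gamma_{n'}$ can enter a pocket only through its chord, so whenever the original drawing of the edge strays into a pocket I reroute that portion to hug the chord on the $CH(S_{n'})$-side; this contributes at most one additional crossing of $CH(S_{n'})$ per pocket. Summed over the two pockets, each edge crosses $CH(S_{n'})$ at most $b+2$ times, establishing (ii). The main obstacle is this crossing estimate: one must verify that each pocket is simply connected and is entered by a given edge at most once, and that the local rerouting does not introduce new edge--edge crossings. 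Both facts follow from keeping each rerouted segment in a thin neighborhood of the corresponding chord, but making this rigorous requires careful topological bookkeeping of how the curves of the original $\Gamma_n$ interact with the two chords.
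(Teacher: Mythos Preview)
Your high-level plan---keep the largest of the three arcs cut out by the roots so that the three roots become consecutive---matches the paper's. The genuine gap is your treatment of the reversed case. If the clockwise cyclic order of the root colors on $CH(S_n)$ is a rotation of $(0,2,1)$ rather than of $(0,1,2)$, then after deleting the two smaller arcs the three roots are three \emph{consecutive} points of $S_{n'}$ whose clockwise colors still step by $-1 \pmod 3$. No subset of $S_n$ in which these three points are consecutive can be an alternating point set, because in an alternating set every clockwise color step is $+1 \pmod 3$; in particular the roots cannot form a triplet in the sense required. Discarding one leaf of $B$ (or any number of leaves) changes neither the colors nor the relative cyclic order of the three root points, so it cannot repair this. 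Your residue analysis is correct as far as it goes, but the obstruction in the reversed case is not a mod-$3$ count on $B$; it is the orientation of the three roots themselves.

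This is precisely where the paper spends the ``$+2$''. After making the roots consecutive, if their cyclic order is wrong the paper encloses the three root points in a small closed curve $C$ and reroutes every edge crossing $C$ so as to permute two of the roots, restoring the clockwise order $(0,1,2)$; each edge picks up at most two extra hull crossings in this local surgery. Your pocket-rerouting is aimed at a different phenomenon (crossings with the two new chords of $CH(S_{n'})$) and does not perform this swap. Moreover, the claim that a surviving edge enters each pocket at most once is not justified---an edge of $\Gamma_n$ can cross a chord of $CH(S_{n'})$ many times---so the pocket argument does not yield the stated bound either. The missing idea is the root-swap rerouting; once you add it, the reversed case is handled and the additive $+2$ has the correct provenance.
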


\begin{lemma}\label{le:lower-bound}
Let $h$ be a positive integer, $F_n$ be a $3$-sky for $n = 520710h^3$, and $S_n$ be an alternating point set compatible with $F_n$. In every $3$-colored point-set embedding of $F_n$ on $S_n$ there exist at least $h^2$ edges with more than $h$ bends.
\end{lemma}
\begin{proof}[sketch]
Let $F_{n_i}$, $i=1,2,\dots,h^2$, be a $3$-sky for $n_i=520689 h^3+21h \cdot i$ and let $S_{n_i}$ be an alternating point set compatible with $F_{n_i}$. We prove by induction on $i$ that in every $3$-colored point-set embedding of $F_{n_i}$ on $S_{n_i}$ there exist $i$ edges with more than $h$ bends.
Notice that for $i=h^2$, we have $n_i=n$.

\textbf{Base case: $i=1$:} We have to prove that in any $3$-colored point-set embedding of $F_{n_1}$ on $S_{n_1}$ with $n_1=520689 h^3+21h$, there exists one edge with more than $h$ bends. 
Suppose as a contradiction that there exists a $3$-colored point-set embedding $\Gamma_{n_1}$ of $F_{n_1}$ on $S_{n_1}$ with curve complexity $h$. $\Gamma_{n_1}$ is also a $3$-colored topological point-set embedding of $F_{n_1}$ on $S_{n_1}$ such that each edge crosses $CH(S_{n_1})$ at most $2h$ times (each edge consists of at most $h+1$ segments). By Lemma~\ref{le:root-triplet} there exists a $3$-colored point-set embedding $\Gamma_{n'}$ of a $3$-sky $F_{n'}$ on an alternating point set $S_{n'}$ such that: (i) $n' \geq \frac{n_1}{3}$; (ii) each edge of $\Gamma_{n'}$ crosses $CH(S_{n'})$ at most $2h+2$ times; (iii) $\Gamma_{n'}$ has a root triplet. 

Since each edge of $\Gamma_{n'}$ crosses $CH(S_{n'})$ at most $2h+2$ times and there are  $3(n'-1)\geq n_1-3$ edges in total, there are at most $(2h+2)(n_1-3)$ crossings of $CH(S_{n_1})$ in total. The number of leaf triplets in $\Gamma_{n'}$ is $n'-1 \geq \frac{n_1}{3}-1$. It follows that there is at least one leaf triplet $\tau$ crossed at most $\frac{3(2h+2)(n_1-3)}{(n_1-3)} = 6h+6 \leq 7h$ times. By Lemma~\ref{le:crossed-triplet} there exists a $3$-colored point-set embedding $\Gamma_{n''}$ of a $3$-sky $F_{n''}$ on an alternating point set $S_{n''}$ such that: (i) $n'' \geq n'- 7h$; (ii) each edge of $\Gamma_{n''}$ crosses $S_{n''}$ at most $2h+3$ times; (iii) $\tau$ is uncrossed. By Lemma~\ref{le:uncrossed-triplet}, the $3$-fan $G_{n''}$ has a $3$-colored topological point-set embedding on $S_{n''}$ such that each edge crosses $CH(S_{n''})$ at most $6h+11$ times and by Lemma~\ref{le:convex} a $3$-colored point set embedding with curve complexity at most $12h+23$. On the other hand, since $n_1 = 520689 h^3+21h$, we have that $n'' \geq n'-7h \geq \frac{n_1}{3}-7h = \frac{520689 h^3 +21 h}{3}-7h=\frac{520689}{3}h^3 \geq \frac{520689}{3}h^3=79 (13h)^3$ and by Theorem~\ref{th:gn}, in every $3$-colored point-set embedding of $G_{n''}$ on $S_{n''}$ at least one edge that has more than $13h$ bends -- a contradiction.

\textbf{Inductive step: $i>1$.} We have to prove that in any $3$-colored point-set embedding of $F_{n_i}$ on $S_{n_i}$ with $n_i=520689 h^3+21h \cdot i$, there exist $i$ edges with more than $h$ bends.

We first prove that there exists at least one edge with more than $h$ bends. Suppose as a contradiction that there exists a $3$-colored point-set embedding $\Gamma_{n_i}$ of $F_{n_i}$ on $S_{n_i}$ with curve complexity $h$. With the same reasoning as in the base case, there would exist  a $3$-colored point set embedding with curve complexity at most $12h+23$ of a  $3$-fan $G_{n''}$, with $n'' \geq \frac{n_i}{3}-7h$. Since $n_i = 520689 h^3+21h \cdot i$, we have that $n'' \geq \frac{n_i}{3}-7h = \frac{520689 h^3 +21 h\cdot i}{3}-7h=\frac{520689}{3}h^3+7h(i-1) \geq \frac{520689}{3}h^3=79 (13h)^3$ and by Theorem~\ref{th:gn}, in every $3$-colored point-set embedding of $G_{n''}$ on $S_{n''}$ at least one edge has more than $13h$ bends -- again a contradiction.

This proves that there is at least one edge $e$ crossed more than $h$ times. We now remove this edge and the whole triplet that contains the point representing the leaf of $e$. We then arbitrarily remove $21h-1$ triplets. The resulting drawing is a $3$-colored point-set embedding $\Gamma_{n'''}$ of $F_{n'''}$ on $S_{n'''}$ for $n'''=n_{i-1}$. By induction, it contains $i-1$ edges each having more than $h$ bends. It follows that $\Gamma_{n_i}$ has $i$ edges each having more than $h$ bends. Since for $i=h^2$ we have $n_i=n$, the statement follows.
\qed  
\end{proof}
%

\begin{theorem}\label{th:lower-bound}
For sufficiently large $n$, there exists a $3$-colored forest $F_n$ consisting of three monochromatic stars with $n$ vertices and a $3$-colored point set $S_n$ in convex position compatible with $F_n$ such that any $3$-colored point-set embedding of $F_n$ on $S_n$ has $\Omega(n^{\frac{2}{3}})$ edges having $\Omega(n^{\frac{1}{3}})$ bends.
\end{theorem}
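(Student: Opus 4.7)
The plan is to obtain Theorem~\ref{th:lower-bound} as an essentially immediate corollary of Lemma~\ref{le:lower-bound} by choosing the parameter $h$ as a function of $n$. I would use the same construction employed throughout this section: the $3$-sky $F_n$ as the forest of three monochromatic stars, and the alternating point set $S_n$ as the compatible $3$-colored point set. Alternating point sets are by definition one-sided convex and hence in convex position, so the ``convex position'' hypothesis of the theorem is met automatically by $S_n$.

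Given a target total number of vertices $n$, I would set $h = \lfloor (n/C)^{1/3} \rfloor$ for the constant $C = 3 \cdot 520710$ (the factor of $3$ converts ``vertices per star'' as used in Lemma~\ref{le:lower-bound} to ``total vertices'' as used in the theorem). Then $h = \Theta(n^{1/3})$ and the $3$-sky whose stars have $520710 \, h^3$ vertices each, together with its compatible alternating point set, fits the hypothesis of Lemma~\ref{le:lower-bound}. Invoking that lemma, every $3$-colored point-set embedding of this instance has at least $h^2 = \Omega(n^{2/3})$ edges with strictly more than $h = \Omega(n^{1/3})$ bends, which is precisely the conclusion of Theorem~\ref{th:lower-bound}.

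The bulk of the work has already been carried out by Lemma~\ref{le:lower-bound}, which itself is built up from Lemmas~\ref{le:uncrossed-triplet}--\ref{le:root-triplet} and Theorem~\ref{th:gn}. The only residual detail concerns values of $n$ that are not exactly of the form $3 \cdot 520710 \, h^3$. For general large $n$, I would apply the lemma with the largest admissible $h$, which still satisfies $h = \Theta(n^{1/3})$, and accommodate the at most $O(h^3) = O(n)$ leftover vertices by enlarging one of the stars and adding the corresponding points to the alternating set so as to preserve the alternating pattern (equivalently, one may simply restrict the theorem to the infinite subsequence of admissible sizes, which already suffices for a ``sufficiently large $n$'' statement). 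I expect no substantial obstacle beyond this elementary bookkeeping, since the delicate inductive step that produces many bad edges rather than just one has already been handled inside the proof of Lemma~\ref{le:lower-bound}.
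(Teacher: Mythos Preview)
Your proposal is correct and matches the paper's intended approach: Theorem~\ref{th:lower-bound} is stated in the paper without an explicit proof and is meant to be read as an immediate corollary of Lemma~\ref{le:lower-bound}, obtained exactly as you describe by taking $h=\Theta(n^{1/3})$ and noting that the alternating point set $S_n$ is in convex position. Your handling of the bookkeeping (the factor relating $n$ to $h$, and the case of non-admissible $n$) is adequate; the only quibble is that in the paper's notation $F_n$ already denotes three stars with $n$ vertices \emph{each}, so the extra factor of~$3$ in your constant $C$ is unnecessary, but this does not affect the asymptotics.
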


We conclude this section with some results deriving from Theorem~\ref{th:lower-bound} and/or related to it. Firstly, Theorem~\ref{th:lower-bound} extends the result of Theorem~\ref{th:gn} since it implies that a $3$-colored point set embedding of $G_n$ may require $\Omega(n^{\frac{2}{3}})$ edges with $\Omega(n^{\frac{1}{3}})$ bends each. Moreover, the result of Theorem~\ref{th:lower-bound} implies an analogous result for a $k$-colored forest of at least three stars for every $k \geq 3$. In particular, when $k=n$ we have the following result that extends the one by Pach and Wenger~\cite{DBLP:journals/gc/PachW01}. 

\begin{corollary}
Let $F$ be a forest of three $n$-vertex stars. Every planar drawing of $F$ with vertices at fixed vertex locations has $\Omega(n^{\frac{2}{3}})$ edges with $\Omega(n^{\frac{1}{3}})$ bends each.
\end{corollary}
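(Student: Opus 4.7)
The theorem is essentially an asymptotic restatement of Lemma~\ref{le:lower-bound}, so the plan is simply to rescale the parameters. Given a target number of vertices $n$, I would set $h = \lfloor (n/520710)^{1/3} \rfloor$, so that $520710\, h^3 \le n < 520710\, (h+1)^3$. In particular $h = \Theta(n^{1/3})$, hence $h^2 = \Theta(n^{2/3})$ and $h = \Theta(n^{1/3})$.

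Next, I would take $F_n$ to be the $3$-sky obtained by padding (or, equivalently, starting from) the $3$-sky whose three monochromatic stars each contain $520710\,h^3$ leaves, and take $S_n$ to be the alternating point set compatible with it; by the definition of alternating point set, $S_n$ is one-sided convex, hence in convex position. Applying Lemma~\ref{le:lower-bound} to the chosen parameter $h$, every $3$-colored point-set embedding of this $3$-sky on $S_n$ contains at least $h^2$ edges, each with more than $h$ bends. Substituting the asymptotics of $h$, this yields $\Omega(n^{2/3})$ edges, each with $\Omega(n^{1/3})$ bends.

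For values of $n$ that do not exactly match $520710\,h^3$ (modulo the 3-sky construction), I would handle the discrepancy by either enlarging the three stars with extra leaves (adding more leaves to an existing star cannot decrease the worst-case curve complexity of the corresponding point-set embedding problem on a compatible alternating point set that extends $S_n$) or by observing that the $\Omega$ notation tolerates a constant factor. Either way, the lower bound carries over to every sufficiently large $n$.

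I do not expect any real obstacle here: all the hard combinatorial work has already been carried out in Lemmas~\ref{le:uncrossed-triplet}, \ref{le:crossed-triplet}, \ref{le:root-triplet}, and in the inductive argument of Lemma~\ref{le:lower-bound}. The only point deserving a brief comment is the transfer from $n = 520710\,h^3$ to arbitrary $n$, and the observation that an alternating point set lies in convex position so that the requirement on $S_n$ is met. With these, Theorem~\ref{th:lower-bound} follows immediately from Lemma~\ref{le:lower-bound}.
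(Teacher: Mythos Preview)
Your argument is a correct derivation of Theorem~\ref{th:lower-bound} from Lemma~\ref{le:lower-bound}, but the statement you were asked to prove is the subsequent \emph{Corollary}, which is about planar drawings with \emph{fixed vertex locations} (equivalently, the $n$-colored case where the vertex-to-point map is a prescribed bijection), not about $3$-colored point-set embeddings. You never address this distinction; your closing sentence in fact says explicitly that you are proving Theorem~\ref{th:lower-bound}.

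The missing bridge is short but must be stated. Starting from the $3$-colored instance $(F_n,S_n)$ furnished by Theorem~\ref{th:lower-bound}, fix any bijection $\phi$ from the vertices of $F_n$ to the points of $S_n$ that respects the three color classes. Any planar drawing of $F_n$ in which each vertex $v$ is placed at $\phi(v)$ is, in particular, a $3$-colored point-set embedding of $F_n$ on $S_n$; hence, by Theorem~\ref{th:lower-bound}, it contains $\Omega(n^{2/3})$ edges each with $\Omega(n^{1/3})$ bends. This is precisely how the paper obtains the Corollary: the sentence preceding it notes that the $3$-colored lower bound transfers to $k$ colors for every $k\ge 3$, and then specializes to $k=n$. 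Your write-up should make this refinement-of-the-coloring step explicit rather than stopping at the $3$-colored statement.
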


One may wonder whether the lower bound of Theorem~\ref{th:lower-bound} also holds when the number of colors or the number of stars is less than three. However, it is immediate to see that this is not the case, i.e., the following theorem holds.

\begin{theorem}
Let $F$ be a $k$-colored forest of $h$ stars and $S$ be a set of points compatible with $F$. If $\max\{k,h\}=2$ then $F$ has a $k$-colored point-set embedding on $S$ with curve complexity at most $2$.
\end{theorem}  

Since a caterpillar can be regarded as a set of stars whose roots are connected in a path, the lower bound of Theorem~\ref{th:lower-bound} also holds for caterpillars. This answers an open problem in~\cite{DBLP:journals/tcs/BadentGL08} about the curve complexity of $k$-colored point-set embeddings of trees for $k \geq 3$. Note that $O(1)$ curve complexity for $2$-colored outerplanar graphs has been proved in~\cite{DBLP:journals/jgaa/GiacomoDLMTW08}.

\begin{corollary}\label{co:caterpillars}
For sufficiently large $n$, a $3$-colored point-set embedding of a $3$-colored caterpillar may require $\Omega(n^{\frac{2}{3}})$ edges having $\Omega(n^{\frac{1}{3}})$ bends.
\end{corollary}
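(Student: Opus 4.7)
The plan is to reduce the caterpillar lower bound directly to Theorem~\ref{th:lower-bound} by \emph{augmenting} a 3-sky into a caterpillar with only two extra edges. Given the 3-sky $F_n$ and alternating point set $S_n$ witnessing Theorem~\ref{th:lower-bound}, I would form a $3$-colored caterpillar $C_n$ by taking $F_n$ and adding the two edges $(q_0,q_1)$ and $(q_1,q_2)$, where $q_i$ is the root of the monochromatic star $T_i$ of color $i$. The spine of $C_n$ is exactly the 3-vertex path $q_0,q_1,q_2$ (together with one leaf attached, if one wants the canonical caterpillar definition); all remaining $3(n-1)$ vertices are leaves hanging from spine vertices. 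Crucially, $C_n$ has the same vertex set and the same $3$-coloring as $F_n$, so the same $3$-colored point set $S_n$ is compatible with $C_n$.

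Now the argument is essentially a ``removal'' observation. Suppose $\Gamma$ is any $3$-colored point-set embedding of $C_n$ on $S_n$. Deleting from $\Gamma$ the drawings of the two additional edges $(q_0,q_1)$ and $(q_1,q_2)$ produces a $3$-colored point-set embedding $\Gamma'$ of $F_n$ on $S_n$: the vertex positions satisfy the color constraints, the remaining edges are still drawn as planar polylines, and the bends along them are unchanged. By Theorem~\ref{th:lower-bound}, $\Gamma'$ must contain $\Omega(n^{2/3})$ edges each carrying $\Omega(n^{1/3})$ bends. Since every edge of $\Gamma'$ is an edge of $\Gamma$ drawn identically, these are $\Omega(n^{2/3})$ edges of $\Gamma$ with $\Omega(n^{1/3})$ bends each, giving the desired bound for caterpillars.

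There is essentially no obstacle here: the only point that needs a brief verification is that attaching the two ``spine'' edges really turns $F_n$ into a caterpillar. This is immediate because after contracting (conceptually) the three roots into a path $q_0q_1q_2$, every non-spine vertex is a leaf adjacent to a spine vertex, which is the definition of a caterpillar. Consequently the corollary follows directly from Theorem~\ref{th:lower-bound} without any further counting or geometric construction.
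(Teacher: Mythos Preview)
Your proposal is correct and is exactly the argument the paper has in mind: the text preceding the corollary says ``a caterpillar can be regarded as a set of stars whose roots are connected in a path,'' and you have spelled out precisely this reduction---add the two root-to-root edges to the $3$-sky, then delete them from any embedding to invoke Theorem~\ref{th:lower-bound}. There is nothing to add.
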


\section{Point-set Embeddings of Paths and Caterpillars}\label{se:upper}

In the light of Corollary~\ref{co:caterpillars}, one may ask whether there exist subclasses of $3$-colored caterpillars for which constant curve complexity can be guaranteed. In this section we first prove that this is the case for $3$-colored paths and then we extend the result to $3$-colored caterpillars whose leaves all have the same color.   

Based on Lemma~\ref{le:top-pse}, we prove that a $3$-colored path $P$ has a topological book embedding consistent with $seq(S)$ and having a constant number of spine crossings. Namely, we first remove the vertices and points of one color from $P$ and $S$, obtaining a $2$-colored path $P'$ and a compatible $2$-colored point set $S'$. Next, we construct a topological book embedding $\gamma_{P'}$ of $P'$ consistent with $seq(S')$  with at most two spine crossings per edge and with suitable properties. Then we use such properties to reinsert the third color and obtain a topological book embedding of $P$ consistent with $seq(S)$. 

$P'$ and $\sigma'=seq(S')$ can be regarded as two binary strings of the same size where one color is represented by bit $0$ and the other one by bit $1$. $P'$ and $\sigma'$ are \emph{balanced} if the number of $0$'s ($1$'s, resp.) in $P'$ equals the number of $0$'s ($1$'s, resp.) in $\sigma'$. 
$P'$ and $\sigma'$ are a \emph{minimally balanced pair} if there does not exist a prefix of $P'$ and a corresponding prefix of $\sigma'$ that are balanced. 

\begin{lemma}\label{le:twin-chunks}
Let $P$ and $\sigma$ be a minimally balanced pair of length $k>1$. Let $b_j(P)$  denote the $j$-th bit of $P$ and $b_j(\sigma)$ denote the $j$-th bit of $\sigma$. Then  $b_1(P) \neq b_k(P)$, $b_k(P) = b_1(\sigma)$, and $b_1(P) = b_k(\sigma)$.
\end{lemma}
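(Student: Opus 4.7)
The plan is to track the running imbalance between the two strings by a signed prefix-difference function and argue that the minimality condition forces this function to have a fixed sign strictly between $0$ and $k$, which then pins down the four boundary bits.

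For $0 \le j \le k$, let
\[
d(j) \;=\; \bigl|\{\, i \le j : b_i(P)=0\,\}\bigr| \;-\; \bigl|\{\, i \le j : b_i(\sigma)=0\,\}\bigr|.
\]
Clearly $d(0)=0$, and since $P$ and $\sigma$ are balanced we also have $d(k)=0$. The minimality of the balance means no proper prefix is balanced, i.e.\ $d(j)\ne 0$ for every $1\le j\le k-1$. Since $d$ changes by $+1$, $0$, or $-1$ at each step (by case analysis on the pair $(b_j(P),b_j(\sigma))$), the set $\{1,\dots,k-1\}$ cannot contain both positive and negative values of $d$, as any sign change would force an intermediate zero. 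Hence $d$ has a constant sign on $\{1,\dots,k-1\}$; by symmetry between $P$ and $\sigma$ we may assume $d(j)>0$ there.

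Now inspect the two endpoints. From $d(0)=0$ and $d(1)\ge 1$, the only possibility is $d(1)=1$, which forces $b_1(P)=0$ and $b_1(\sigma)=1$. From $d(k-1)\ge 1$ and $d(k)=0$, the only way to land at zero in one step is $d(k-1)=1$ together with a decrement, which forces $b_k(P)=1$ and $b_k(\sigma)=0$. Comparing these four values yields $b_1(P)\ne b_k(P)$, $b_k(P)=b_1(\sigma)$, and $b_1(P)=b_k(\sigma)$, as required; the case $d(j)<0$ on $\{1,\dots,k-1\}$ is entirely analogous with the roles of $0$ and $1$ swapped.

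There is no real obstacle here: the only subtle point is justifying the sign-consistency of $d$ on $\{1,\dots,k-1\}$, which is immediate from the fact that $d$ takes integer values and changes by at most one per step, combined with the hypothesis that it never vanishes strictly inside the interval.\qed
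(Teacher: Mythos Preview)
Your proof is correct and uses essentially the same approach as the paper: both introduce the prefix-difference function counting the excess of $0$'s in $P$ over $\sigma$, observe that it moves by at most one per step, and combine this with the minimality hypothesis (no intermediate zero) to pin down the boundary bits. The only organizational difference is that the paper first establishes $b_1(P)\neq b_1(\sigma)$ and $b_k(P)\neq b_k(\sigma)$ directly and then argues $b_1(P)\neq b_k(P)$ by contradiction, whereas you read off all four boundary values directly from the sign-consistency of $d$; this is a minor streamlining, not a different idea.
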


Let $\Gamma$ be a topological book embedding, $\ell$ be the spine of $\Gamma$, and $p$ be a point of $\ell$ (possibly representing a vertex). We say that $p$ is visible from above (below) if the vertical ray with origin at $p$ and lying in the top (bottom) page does not intersect any edge of $\Gamma$. We say that the segment $\overline{pq}$ is \emph{visible from above (below)} if each point $r$ in the segment is visible from above (below). Let $u$ and $v$ be two vertices of $\Gamma$ that are consecutive along the spine $\ell$, we say that segment $\overline{uv}$ is \emph{accessible} if it contains a segment that is visible from below.
A vertex $v$ of $\Gamma$ is \emph{hook visible} if there exists a segment $\overline{pq}$ of the spine such that $\overline{pq}$ is visible from below and for any point $r$ of $\overline{pq}$ we can add an edge in the top page of $\Gamma$ connecting $v$ with $r$ without crossing any other edges of $\Gamma$ (see Fig.~\ref{fi:hook-visibility}); $\overline{pq}$ is the \emph{access interval for vertex $v$}. If the access interval is to the right (left) of $v$ we say that $v$ is \emph{hook visible from the right (left)}. 

\begin{figure}[tb]
	\centering
	\begin{minipage}[b]{.24\textwidth}
		\centering
		\includegraphics[width=\textwidth]{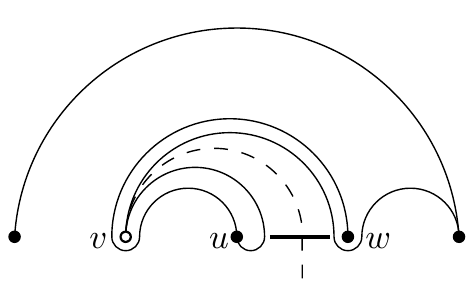}
		\subcaption{}\label{fi:hook-visibility}
	\end{minipage}
	\begin{minipage}[b]{.24\textwidth}
		\centering
		\includegraphics[width=\textwidth]{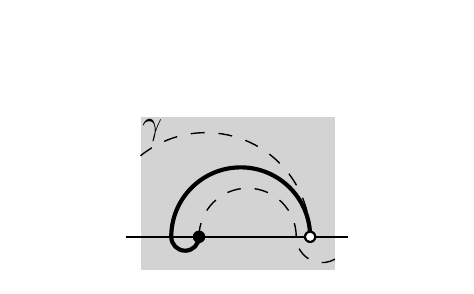}
		\subcaption{}\label{fi:basecase}
	\end{minipage}	
	\begin{minipage}[b]{.24\textwidth}
		\centering
		\includegraphics[width=\textwidth]{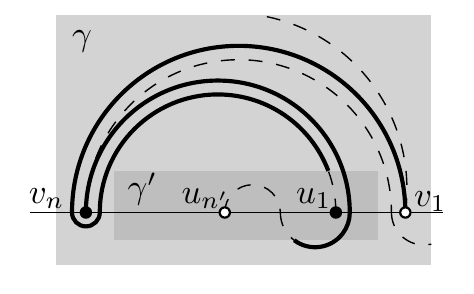}
		\subcaption{}\label{fi:case1}
	\end{minipage}		
	\begin{minipage}[b]{.24\textwidth}
		\centering
		\includegraphics[width=\textwidth]{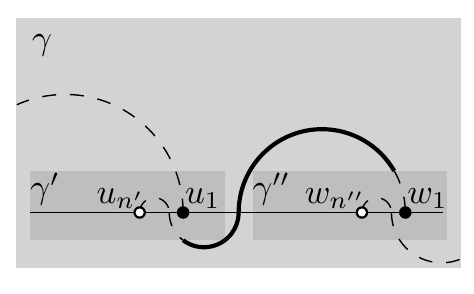}
		\subcaption{}\label{fi:case2}
	\end{minipage}		
	\caption{\label{fi:boh} (a) Illustration of the hook visibility property. The bold segment is the access interval. (b)-(d) Proof of Lemma~\ref{le:2-colored-book-embedding}: (b) Base cas; (c) Case 1; (d) Case 2.}	
\end{figure}

\begin{lemma}\label{le:2-colored-book-embedding}
Let $P$ be a $2$-colored path and $\sigma$ be a $2$-colored sequence compatible with $P$. Path $P$ admits a topological book embedding $\gamma$ consistent with $\sigma$ and with the following properties: 
\begin{inparaenum}
	\item[(a)] Every edge of $\gamma$ crosses the spine at least once and 	at most twice.
	\item[(b)] For any two vertices $u$ and $v$ that are consecutive along the spine $\ell$ of $\gamma$, segment $\overline{uv}$ is accessible from below.
	\item[(c)] Every spine crossing is visible from below.
	\item[(d)] The first vertex $v_1$ of $P$ is visible from above; the last vertex $v_n$ of $P$ is hook visible from the right; to the right of its access interval there is only one vertex and no spine crossing.
\end{inparaenum}
\end{lemma}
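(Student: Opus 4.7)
The plan is to prove the lemma by induction on the length $n$ of $P$, using Lemma~\ref{le:twin-chunks} to structure the inductive step.

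For the base case ($n=2$, Fig.~\ref{fi:basecase}), I would place the two vertices on the spine according to $\sigma$ and draw the unique edge with one spine crossing. Two sub-cases arise depending on whether the colors in $P$ and $\sigma$ appear in the same order or in reverse; in each I would choose the initial direction at $v_1$ so that it emerges downward (visible from above) and the terminal direction at $v_2$ so that the hook visibility required by (d) is available. Properties (a)--(d) are then verified directly.

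For the inductive step I would distinguish two cases based on whether $b_1(P)=b_1(\sigma)$. \emph{Case 1} ($b_1(P)=b_1(\sigma)$, Fig.~\ref{fi:case1}): Place $v_1$ at the leftmost spine position. The remaining pair $(v_2\cdots v_n,\,b_2(\sigma)\cdots b_n(\sigma))$ is a shorter compatible instance, so by induction it admits a book embedding $\gamma'$ satisfying (a)--(d). I would extend $\gamma'$ by drawing the new edge $v_1v_2$ from $v_1$ downward into the bottom page, following the accessibility of the segment adjacent to $v_2$ guaranteed by property (b) of $\gamma'$, and rising to $v_2$ through a single (or, if necessary, a second) spine crossing. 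Property (d) for $v_n$ is inherited from $\gamma'$ because the rightmost portion of the spine is untouched.

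\emph{Case 2} ($b_1(P)\neq b_1(\sigma)$, Fig.~\ref{fi:case2}): Let $(P',\sigma')$ be the minimally balanced prefix of $(P,\sigma)$ of length $k\ge 2$. By Lemma~\ref{le:twin-chunks}, $b_1(P)=b_k(\sigma)$ and $b_k(P)=b_1(\sigma)$, so $v_k$ may be placed at the leftmost spine position and $v_1$ at the $k$-th one. Apply induction separately to the middle pair $(v_2\cdots v_{k-1},\,b_2(\sigma)\cdots b_{k-1}(\sigma))$---still balanced since $\{b_1(P),b_k(P)\}=\{b_1(\sigma),b_k(\sigma)\}$---and to the suffix pair $(v_{k+1}\cdots v_n,\,b_{k+1}(\sigma)\cdots b_n(\sigma))$ (empty when $k=n$). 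Reassemble the sub-drawings along the spine and draw the new edges $v_1v_2$, $v_{k-1}v_k$, and $v_kv_{k+1}$ through the bottom page, with at most two spine crossings each, using the hook visibility and visibility-from-above granted by property (d) of the sub-drawings.

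The main obstacle is the bookkeeping for property (d), which is a global condition on the rightmost stretch of the spine. The combination/extension steps must not introduce spine crossings to the right of $v_n$'s access interval nor disturb the ``one vertex beyond'' structure there. In Case 2, there is the added challenge of routing the long edge $v_kv_{k+1}$---connecting the leftmost spine position to the $(k+1)$-th---across the interior of the combined drawing without crossing existing edges or spoiling the visibility of the now-interior vertex $v_1$. Lemma~\ref{le:twin-chunks} is used precisely to guarantee that the colors at the forced spine positions match those of $v_1$ and $v_k$, making the routing realizable.
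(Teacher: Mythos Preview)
Your outline has the right skeleton (induction, Lemma~\ref{le:twin-chunks}) but your case split and edge routings diverge from the paper's in ways that leave real gaps.

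The paper splits on whether $(P,\sigma)$ is a \emph{minimally balanced pair}. In its Case~1 (minimally balanced) it peels off \emph{both} $v_1$ and $v_n$, placing $v_1$ at the new \emph{rightmost} point and $v_n$ at the new leftmost one, and recurses on the middle; in its Case~2 it splits at any balanced prefix and recurses on \emph{both} halves, connecting them via the hook visibility of the last vertex of the left half and the visibility-from-above of the first vertex of the right half. Your Case~1 (peeling off only $v_1$ on the left) is the paper's Case~2 with a length-one prefix, and your Case~2 is the paper's Case~1 applied to the prefix followed by its Case~2 connection to the suffix---so you have effectively unrolled one level of the paper's recursion.

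The gap is in the routings. In your Case~1 you appeal to property~(b) to reach $v_2$: go down from $v_1$, travel in the bottom page to an accessible point next to $v_2$, cross the spine, and ``rise to $v_2$''. But accessibility from below only gives a point whose \emph{downward} ray is clear; after you cross the spine there you are in some local top-page face that need not contain $v_2$ (an arc of $\gamma'$ in the top page with an endpoint between $v_2$ and that point will separate you from $v_2$). The paper instead uses property~(d): $v_2$ is visible from \emph{above}, so one reaches it by an outer arc in the top page, with the mandatory spine crossing(s) placed at the extreme left so that $v_1$'s own visibility from above is not spoiled.

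In your Case~2 the ``long edge'' $v_kv_{k+1}$ is exactly the obstacle you flag, and it is the reason the paper does \emph{not} unroll the recursion: by recursing on the prefix as a whole, it obtains property~(d) for the prefix, in particular that $v_k$ is hook visible with its access interval sitting just left of the rightmost prefix vertex. That hook lets the connecting edge enter the bottom page near position $k{-}1$, pass under $v_1$ (which sits at the prefix's right end), and re-emerge to the right of the prefix before arcing over to $v_{k+1}$---so no top-page arc ever covers $v_1$. In your decomposition $v_k$ is placed manually at position~1 and has no such hook available; you would first have to reproduce the paper's Case~1 routing of $v_1v_2$ and $v_{k-1}v_k$ to create that hook before you can route $v_kv_{k+1}$, and your sketch does not do this. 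As written, any top-page arc carrying $v_kv_{k+1}$ from position~1 past position~$k$ will sit above $v_1$ and destroy property~(d).
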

\begin{proof}
We prove the statement by induction on the length $n$ of $P$ (and of $\sigma$). If $n=1$ the statement trivially holds. If $n=2$ we draw the unique edge of $P$ with one spine crossing immediately to the left of the leftmost vertex in the drawing (see Fig.~\ref{fi:basecase}). Also in this case the statement holds. Suppose that $n>2$ and that the statement holds for every $k<n$. We distinguish between two cases.

\textbf{Case 1: $P$ and $\sigma$ are a minimally balanced pair.} By Lemma~\ref{le:twin-chunks} the first vertex of $P$ has the same color as the last element of $\sigma$, the last element of $P$ has the same color as the first element of $\sigma$ and these two colors are different. This means that by removing the first and the last elements from both $P$ and $\sigma$, we obtain a new $2$-colored path $P'$ of length $n-2$ and a new $2$-colored sequence $\sigma'$ compatible with $P'$. By induction, $P'$ admits a topological book embedding $\gamma'$ consistent with $\sigma'$ and satisfying properties (a)--(d). To create a topological book embedding of $P$ consistent with $\sigma$, we add a point $p_1$ before all the points of $\gamma'$, whose color is the same as the last vertex $v_n$ of $P$, and a point $p_2$ after all points of $\gamma'$, whose color is the same as the first vertex $v_1$ of $P$. Vertex $v_1$ is mapped to $p_2$ and vertex $v_n$ is mapped to $p_1$. We connect $v_1$ to the first vertex $u_1$ of $P'$ with an edge incident to $p_2$ from above, crossing the spine once immediately before $p_1$ and once immediately after $p_1$ and arriving to $u_1$ from above (by property (d), $u_1$ is visible from above). We then connect the last vertex $u_{n'}$ of $P'$ to $v_n$. Since $u_{n'}$ is hook visible by property (d), we connect it to $v_n$ with an edge that starting from $u_{n'}$ reaches the access interval of $u_{n'}$, crosses the spine between the last vertex of $\gamma'$ and $p_2$ and reaches $p_1$ from above. As shown in Fig.~\ref{fi:case1} the two edges $(v_1,u_1)$ and $(u_{n'},v_n)$ can be added without creating any crossing. Property (a) holds by construction. About properties (b) and (c), we added two arcs in the bottom page. The first one connects a point immediately before $v_n$ and a point immediately after it, so the segment of the spine between $v_n$ and its following vertex is accessible from below; also, the addition of this arc does not change the accessibility of the spine crossing of $\gamma'$. The second arc added in the bottom page connects a point $q$ in the access interval of $u_{n'}$ with a point immediately after $u_1$; by property (d) of $\gamma'$ there is no vertex or spine crossings between $q$ and $u_1$. Thus the segments connecting $u_1$ to its preceding and to its following vertices are visible from below and property (b) holds; furthermore the addition of this arc does not change the accessibility to existing spine crossings. Since the new created spine crossings are visible from below, property (c) also holds. It is immediate to see that also property (d) holds; see for example Fig.~\ref{fi:case1}.             

\textbf{Case 2: $P$ and $\sigma$ are not a minimally balanced pair.} In this case there exists a prefix (i.e. a subpath) $P'$ of $P$ and a corresponding prefix $\sigma'$ of $\sigma$ that are balanced. $P'$ is $2$-colored path and $\sigma'$ is a $2$-colored sequence compatible with $P'$ and their length is less than $n$. By induction, $P'$ admits a topological book embedding $\gamma'$ consistent with $\sigma'$ and statisfying properties (a)--(d). On the other hand, $P''=P\setminus P'$ is also a $2$-colored path and $\sigma''=\sigma \setminus \sigma'$ is a $2$-colored sequence consistent with $P''$. Thus, $P''$ also admits a topological book embedding $\gamma''$ consistent with $\sigma''$ and statisfying properties (a)--(d). Since the last vertex $u_n$ of $P'$ is hook visible in $\gamma'$ and the first vertex $w_1$ of $P''$ is visible from above in $\gamma''$, the two vertices can be connected with an edge that crosses the spine twice (see Fig.~\ref{fi:case2}), thus creating a topological book embedding $\gamma$ of $P$ consistent with $\sigma$. 
Property (a) holds by construction. The only arc added in the bottom page connects a point $q$ in the access interval of $u_n$ and a point $q'$ immediately after the first vertex $u_1$ of $P'$. By property (d) of $\gamma'$ there is no vertex or spine crossing between $q$ and $u_1$ and between $u_1$ and $q'$, thus properties (b) and (c) hold for $\gamma$. Property (d) holds because it holds for $\gamma'$ and $\gamma''$.
\qed         
\end{proof}



\begin{lemma}\label{le:3-colored-book-embedding}
A $3$-colored path admits a topological book embedding with at most two spine crossings per edge consistent with any compatible  $3$-colored sequence.
\end{lemma}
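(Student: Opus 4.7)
The plan is to reduce the 3-colored case to the 2-colored case of Lemma~\ref{le:2-colored-book-embedding}: first delete color 2 from both $P$ and $\sigma$, apply that lemma, and then reinsert the color-2 vertices exploiting the accessibility/visibility properties (a)--(d) of the resulting 2-colored embedding.

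Let $P'$ denote the 2-colored path obtained from $P$ by removing every color-2 vertex and joining each pair of surviving $P$-neighbors by a new edge, so that each edge of $P'$ corresponds to a (possibly empty) maximal run of color-2 vertices of $P$. Let $\sigma'$ denote the 2-colored sequence obtained from $\sigma$ by discarding its color-2 entries; the pair $(P',\sigma')$ is compatible, so Lemma~\ref{le:2-colored-book-embedding} yields a topological book embedding $\gamma'$ of $P'$ consistent with $\sigma'$, with at most two spine crossings per edge and satisfying (a)--(d).

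To extend $\gamma'$ to the desired embedding $\gamma$ of $P$, we insert the color-2 vertices as new points on the spine at the positions prescribed by $\sigma$. Property (b) of $\gamma'$ guarantees that each such insertion falls in an open segment between two consecutive $\gamma'$-vertices that contains a sub-segment visible from below, and we place the new point within that visible sub-segment. For each edge $(u,v)$ of $P'$ corresponding to a subpath $u,w_1,\ldots,w_k,v$ of $P$, we then replace the curve of $(u,v)$ in $\gamma'$ by $k+1$ curves realizing the consecutive edges $(u,w_1),(w_1,w_2),\ldots,(w_k,v)$: each new curve follows the original curve of $(u,v)$ closely and drops to the spine at the appropriate $w_i$ via a short detour through the bottom page. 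Properties (c) and (d) ensure that these detours create no new crossings with existing edges, while the at most two spine crossings of the original curve of $(u,v)$ are distributed among the new sub-edges.

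The main obstacle will be choosing a suitable bijection between the color-2 vertices of $P$ and the color-2 spine positions of $\sigma$, together with the routing, so that all $k+1$ new sub-edges simultaneously have at most two spine crossings and the overall drawing remains planar. This will require a case analysis on the shape of the curve of $(u,v)$ in $\gamma'$ (one versus two spine crossings, and on which side of the spine each endpoint lies) and on the relative spine positions of $u$, $v$, and the $w_i$; a natural candidate is the greedy assignment that matches the order of the color-2 vertices in $P$ with the left-to-right order of the color-2 positions on the spine.
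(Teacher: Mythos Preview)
Your high-level reduction matches the paper exactly: strip out one color, apply Lemma~\ref{le:2-colored-book-embedding} to the resulting $2$-colored path, then reinsert the removed vertices using properties (a)--(d). Two things are missing, one minor and one central.

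Minor: the deleted color must be one \emph{not} used by either endpoint of $P$. Otherwise a maximal color-$2$ run can sit at an end of $P$, and your ``joining each pair of surviving $P$-neighbors'' fails because only one neighbor survives.

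Central: the reinsertion is the whole difficulty, and your last paragraph essentially concedes it is unresolved. The direct routing you sketch---follow the curve of $(u,v)$ and ``drop'' to each $w_i$ through the bottom page---does not work: the $w_i$ can lie anywhere on the spine, far from that curve, and drops belonging to different edges of $P'$ will collide. The greedy left-to-right bijection does not control this. The paper's device is different and is the real content of the lemma. For each replaced edge $(u_1,u_2)$ one places $k$ \emph{image points} on the spine in a tiny interval at the edge's first spine crossing $\chi$ (property~(a) guarantees $\chi$ exists, property~(c) that it is visible from below), and realises the subpath $u_1,p_1,\dots,p_k,u_2$ through these image points in the top page, at no extra spine-crossing cost. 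Only afterwards are the image points connected to the actual color-$c_2$ positions $Q$ by a non-crossing perfect matching drawn in the bottom page; this is possible precisely because both families of points are visible from below, so a bracket-matching argument yields a planar matching regardless of how the two families interleave. Each matching edge is then doubled to turn $p_i$ into two spine crossings flanking the real vertex $q_i$, giving at most two spine crossings on every resulting edge. Without this image-point-plus-matching idea the reinsertion cannot be completed within the stated bound.
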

\begin{proof}[sketch]
Let $c_2$ be a color distinct from the colors of the end-vertices of $P$. Let $v_1,v_2,\dots,v_k$ be a maximal subpath of $P$ colored $c_2$. Let $u_1$ and $u_2$ be the vertices along $P$ before $v_1$ and after $v_k$, respectively. 
We replace the subpath $u_1,v_1,v_2,\dots,v_k,u_2$ with an edge $(u_1,u_2)$. We do the same for every maximal subpath colored $c_2$. Let $P'$ be the resulting $2$-colored path and $\sigma'$ be the $2$-colored sequence obtained from $\sigma$ by removing all elements of color $c_2$.


By Lemma~\ref{le:2-colored-book-embedding}, $P'$ admits a topological book embedding $\gamma'$  consistent with $\sigma'$ that satisfies properties (a), (b), (c) and (d). We add to $\gamma'$ a set $Q$ of points colored $c_2$ to represent the removed vertices that will be added back. These points must be placed so that the sequence of colors along the spine coincides with $\sigma$. By property (b) of $\gamma'$ all these points can be placed so that they are accessible from below. 
We now have to replace some edges of $P'$ with paths of vertices colored $c_2$. 
Let $(u_1,u_2)$ be an edge that has to be replaced by a path $u_1,v_1,v_2,\dots,v_k,u_2$. For each vertex $v_i$ to be added ($i=1,2,\dots,k$) we add an \emph{image point} to the drawing. The image points are added as follows. By property (a), the edge $(u_1,u_2)$ crosess the spine at least once. Let $\chi$ be the point where $(u_1,u_2)$ crosses the spine for the first time when going from $u_1$ to $u_2$. By property (c) $\chi$ is visible from below. This means there is a segment $s$ of $\ell$ with $\chi$ as an endpoint that is visibile from below. We place $k-1$ image points $p_1,p_2,\dots,p_{k-1}$ inside this segment, while $\chi$ is the $k$-th image point $p_k$ (it is the leftmost if $s$ is to the left of $\chi$, while it is the rightmost if $s$ is to the right of $\chi$). The first arc of the edge $(u_1,u_2)$ is replaced by an arc connecting $u_1$ to $p_1$. Each image point $p_i$ is connected to the $p_{i+1}$ ($i=1,2,k-1$) by means of an arc in the top page. Finally, the last image point $p_k$ is already connected to $u_2$ by means of the remaining part of the original edge $(u_1,u_2)$. Notice that the edge $(u_1,p_1)$ does not cross the spine, and the same is true for any edge $(p_i,p_{i+1})$, while the edge $(p_k,u_2)$ crosses the spine at most once (the original edge had at most two spine crossing one of which was at $\chi=p_k$). We have replaced the edge $(u_1,u_2)$ with a path $\pi=\langle u_1, p_1,p_2,\dots,p_k, u_2 \rangle$ with $k+1$ edges, as needed. However, the points representing the intermediate vertices of this path are not the points of the set $Q$. The idea then is to ``connect'' the image points to the points of $Q$. To this aim, we add matching edges in the bottom page between the image points and the points of $Q$. Since both the points of $Q$ and the image points are visible from below, these matching edges do not cross any other existing edge. Moreover, by using a simple brackets matching algorithm, we can add the matching edges so that they do not cross each other. Finally the matching edges can be used to create the actual path that represent $u_1,v_1,v_2,\dots,v_k,u_2$. 
%
%
\qed
\end{proof}

%
%

The following theorem is a consequence of Lemmas~\ref{le:top-pse} and~\ref{le:3-colored-book-embedding}.

\begin{theorem}\label{th:path}
Every $3$-colored path admits a $3$-colored point-set embedding with curve complexity at most $5$ on any compatible $3$-colored point set.
\end{theorem}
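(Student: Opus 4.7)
The plan is a direct composition of the two cited lemmas. Given a $3$-colored path $P$ and a compatible $3$-colored point set $S$, I would first extract the $3$-colored sequence $\sigma = seq(S)$ induced by $S$, reading off colors left-to-right by $x$-coordinate. Since $S$ is compatible with $P$, each color class has the same cardinality in $\sigma$ as in $P$, so $\sigma$ is a $3$-colored sequence compatible with $P$.

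Next, I would invoke Lemma~\ref{le:3-colored-book-embedding} on $P$ with this sequence $\sigma$ to obtain a topological book embedding $\gamma$ of $P$ that is consistent with $\sigma = seq(S)$ and in which every edge crosses the spine at most twice, i.e., $h = 2$ in the terminology of Lemma~\ref{le:top-pse}.

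Finally, I would apply Lemma~\ref{le:top-pse} with $h=2$ to convert $\gamma$ into a $3$-colored point-set embedding of $P$ on $S$. Lemma~\ref{le:top-pse} guarantees curve complexity at most $2h + 1 = 5$, which is precisely the bound claimed.

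There is no real obstacle here: both ingredients are already established, and the argument is purely a plug-in. The only thing worth double-checking is that the compatibility hypothesis lines up correctly — that is, that the sequence of colors read along the spine in the book embedding produced by Lemma~\ref{le:3-colored-book-embedding} genuinely matches the order of colors along $S$ required by Lemma~\ref{le:top-pse}. This is ensured because Lemma~\ref{le:3-colored-book-embedding} is stated for an \emph{arbitrary} compatible $3$-colored sequence, and we feed it $seq(S)$ itself.
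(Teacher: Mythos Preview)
Your proposal is correct and matches the paper's own argument exactly: the paper simply states that Theorem~\ref{th:path} is a consequence of Lemmas~\ref{le:top-pse} and~\ref{le:3-colored-book-embedding}, and you have spelled out precisely how those two lemmas compose to give the bound $2\cdot 2 + 1 = 5$.
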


Theorem~\ref{th:path} can be extended to a subclass of $3$-colored caterpillars.

\begin{theorem}\label{th:caterpillar}
Every $3$-colored caterpillar with monochromatic leaves admits a $3$-colored point-set embedding with curve complexity at most $5$ on any compatible $3$-colored point set.
\end{theorem}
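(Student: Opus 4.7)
The plan is to construct a topological book embedding of $T$ consistent with $seq(S)$ and with at most two spine crossings per edge; Lemma~\ref{le:top-pse} then yields a point-set embedding with curve complexity at most $5$, exactly as in the proof of Theorem~\ref{th:path}. Denote by $c_L$ the common color of all leaves of $T$, by $P=s_1,s_2,\ldots,s_m$ the spine, and by $k_i$ the number of leaves of $s_i$; write $L=\sum_i k_i$ for the total number of leaves. Because leaves are pairwise indistinguishable by color, I am free to choose (i) which $m_{c_L}$ of the $|S_{c_L}|$ points of color $c_L$ serve as positions of the $c_L$-colored spine vertices---call these the \emph{spine-points}---and which of the remaining $L$ points serve as positions of leaves, the \emph{leaf-points}; and (ii) how the leaves of $T$ are distributed among the chosen leaf-points, subject only to the capacity $k_i$ at each $s_i$.

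First, I would apply Lemma~\ref{le:3-colored-book-embedding} to the spine $P$ with the $3$-colored sub-sequence $\sigma_P$ of $seq(S)$ obtained by deleting the $L$ leaf-points; this sub-sequence is compatible with $P$. The lemma provides a topological book embedding $\gamma_P$ of $P$ consistent with $\sigma_P$ and with at most two spine crossings per spine edge. I would then insert the $L$ leaf-points at their prescribed $x$-positions along the spine of $\gamma_P$, and assign each leaf-point greedily to a spine vertex: processing the leaf-points in order of their $x$-coordinate, I allocate each to a spine vertex $s_i$ that still has remaining leaf-capacity and minimum $x$-distance, breaking ties deterministically. This rule ensures that the leaves assigned to each $s_i$ form a cluster localized around $s_i$ on the spine of the book.

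Next, for each leaf-point $q$ assigned to $s_i$, I would draw the caterpillar edge $(s_i,q)$ either as a short top-page arc---when $q$ is adjacent to $s_i$ in $\gamma_P$---or as a bottom-page arc with at most two spine crossings. For the bottom-page routing I rely on the accessibility-from-below and spine-crossing-visibility properties (b) and (c) of Lemma~\ref{le:2-colored-book-embedding}, which propagate from the $2$-colored sub-embedding through the construction in the proof of Lemma~\ref{le:3-colored-book-embedding} to $\gamma_P$. The leaf arcs belonging to a given $s_i$ are drawn nested in the bottom page, tucked beneath existing bottom-page arcs of $\gamma_P$ where necessary; the greedy assignment prevents the clusters of different spine vertices from interleaving, so the leaf arcs of distinct spine vertices do not cross each other.

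The main obstacle is precisely this last step: verifying that the visibility and accessibility properties inherited by $\gamma_P$ from Lemma~\ref{le:2-colored-book-embedding}, combined with the greedy assignment, really do suffice to route every leaf edge with at most two spine crossings and without creating any crossings with existing arcs. This requires a careful inspection of the construction in Lemma~\ref{le:3-colored-book-embedding} to confirm that spine segments remain accessible and that spine crossings remain visible from below after the reinsertion of $c_L$-colored spine vertices, together with a planar-routing argument for the nested leaf arcs. Once this verification goes through, every edge of the resulting topological book embedding of $T$ has at most two spine crossings, and the theorem follows by Lemma~\ref{le:top-pse}.
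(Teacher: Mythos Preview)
Your approach diverges from the paper's, and the step you yourself flag as ``the main obstacle'' is a genuine gap rather than a routine verification. The paper does \emph{not} apply Lemma~\ref{le:3-colored-book-embedding} to the spine and then attach leaves afterwards. Instead it strips out \emph{all} vertices of the leaf color $c_2$ at once---both the leaves and the $c_2$-colored backbone vertices---obtaining a $2$-colored path, and applies Lemma~\ref{le:2-colored-book-embedding} directly. The reinsertion of the $c_2$-colored vertices then follows the image-point/matching-edge machinery of Lemma~\ref{le:3-colored-book-embedding}, with one twist: each dummy edge $(u_1,u_2)$ must be replaced not by a path but by a small caterpillar $C'$, and the paper embeds $C'$ as a $1$-page book embedding on the image points in the top page before adding the matching edges. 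Leaves are then simply the endpoints of single matching edges.

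The reason your route runs into trouble is that properties~(b) and~(c) of Lemma~\ref{le:2-colored-book-embedding} do \emph{not} propagate through the construction of Lemma~\ref{le:3-colored-book-embedding}. That construction places matching arcs in the bottom page connecting image points to the $Q$-points; these arcs may span long stretches of the spine, so a leaf-point you insert afterwards can easily sit beneath such an arc and fail to be visible from below. Likewise, spine crossings created in Lemma~\ref{le:3-colored-book-embedding} are not asserted to be visible from below. Your plan to route leaf edges ``tucked beneath existing bottom-page arcs'' with at most two spine crossings therefore has no foundation: once visibility from below is lost, there is no bound on how many existing bottom-page arcs a leaf edge must thread through. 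The paper avoids this entirely by inserting \emph{all} $c_2$-points (leaf and backbone alike) into the output of Lemma~\ref{le:2-colored-book-embedding}, where properties~(b) and~(c) genuinely hold, and by handling the leaf edges in the top page via the $1$-page embedding of $C'$ rather than trying to route them from below after the fact.
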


The above results motivate the study of $4$-colored graphs, in particular a natural question is whether $4$-colored paths admit point-set embedding on any set of points with constant curve complexity. 

\begin{theorem}
Let $P$ be a $4$-colored path with $n$ vertices and let $S$ be a $4$-colored point set compatible with $P$. If the first $h \geq 2$ vertices along $P$ only have two colors and the remaining $n-h$ only have the other two colors, then $P$ has a $4$-colored point-set embedding on $S$ with curve complexity at most $5$.
\end{theorem}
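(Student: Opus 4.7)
The plan is to build a topological book embedding of $P$ consistent with $\sigma=seq(S)$ having at most two spine crossings per edge; Lemma~\ref{le:top-pse} will then deliver the $4$-colored point-set embedding on $S$ with curve complexity at most $5$.

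First I would decompose $\sigma$ into its subsequence $\sigma_1$ over colors $\{0,1\}$ and its subsequence $\sigma_2$ over colors $\{2,3\}$. By hypothesis $\sigma_1$ is compatible with the subpath $P_1=v_1,\ldots,v_h$ and $\sigma_2$ with the subpath $P_2=v_{h+1},\ldots,v_n$, so Lemma~\ref{le:2-colored-book-embedding} provides topological book embeddings $\gamma_1$ of $P_1$ on $\sigma_1$ and $\gamma_2$ of $P_2$ on $\sigma_2$, each with at most two spine crossings per edge and satisfying properties (a)--(d). I would then lay $\gamma_1$ and $\gamma_2$ on a common spine $\ell$ whose induced color sequence is $\sigma$: the $\{0,1\}$-positions on $\ell$ are inherited from $\gamma_1$, the $\{2,3\}$-positions from $\gamma_2$, and the two subsets are interleaved along $\ell$ exactly as prescribed by $\sigma$.

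The delicate step is drawing the arcs so that those of $\gamma_2$ do not intersect those of $\gamma_1$. To this end I would exploit properties (b) and (c) of $\gamma_1$, which guarantee that every segment between consecutive $\{0,1\}$-vertices and every spine crossing of $\gamma_1$ is visible from below. These form ``corridors'' through which the arcs of $\gamma_2$ can be threaded, in the spirit of the image-point insertion technique used in the sketch of Lemma~\ref{le:3-colored-book-embedding}. Symmetric use of the same properties for $\gamma_2$ allows the arcs of $\gamma_1$ to pass over $\gamma_2$'s vertices without conflict. Finally I would add the bridge edge $(v_h,v_{h+1})$: by property (d), $v_h$ is hook visible from the right in $\gamma_1$ and $v_{h+1}$ is visible from above in $\gamma_2$, so the bridge can be drawn with at most two additional spine crossings, following the same edge-insertion scheme used in Case~1 of the proof of Lemma~\ref{le:2-colored-book-embedding}.

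The hard part will be verifying the planarity of the merge: one must show, by a case analysis on how the $\{0,1\}$- and $\{2,3\}$-positions interleave in $\sigma$, that the two independently constructed drawings can actually be combined on $\ell$ so that no arc of $\gamma_1$ crosses an arc of $\gamma_2$ and so that no new spine crossings are introduced in any already-drawn edge. Properties (a)--(d) of Lemma~\ref{le:2-colored-book-embedding} are tailored precisely to support such combinations, as witnessed by the proof of Lemma~\ref{le:3-colored-book-embedding}, but the detailed verification across all possible interleaving patterns of $\sigma$ is the most delicate part of the argument.
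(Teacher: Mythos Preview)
Your high-level plan---split $P$ at the edge $(v_h,v_{h+1})$ into two $2$-colored subpaths, apply Lemma~\ref{le:2-colored-book-embedding} to each, merge on a common spine consistent with $\sigma$, and finally add the bridge edge---is exactly the paper's. But you are missing the one idea that makes the merge immediate, and without it the ``threading'' you outline does not obviously stay within two spine crossings per edge.

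The paper \emph{reflects one of the two embeddings across the spine}. It first reverses $P_1$ (so that $v_h$ becomes the first vertex of $\overline{P_1}$), builds $\overline{\gamma_1}$ from Lemma~\ref{le:2-colored-book-embedding}, and builds $\gamma_2$ the same way but then mirrors it top-to-bottom. After the reflection, property~(b) for $\gamma_2$ says the gaps between consecutive $\gamma_2$-vertices are accessible from \emph{above}, and property~(d) says its first vertex is visible from \emph{below}. Now the two drawings are complementary: $\overline{\gamma_1}$ leaves room in the bottom page between its vertices, while the reflected $\gamma_2$ leaves room in the top page between its vertices, so the two can be interleaved along $\sigma$ with no interaction---no threading, no extra spine crossings, no case analysis. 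The bridge edge then joins the first vertex of $\overline{P_1}$ (visible from above) to the first vertex of $P_2$ (visible from below after reflection) via a single spine crossing placed to the right of everything.

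In your version both $\gamma_1$ and $\gamma_2$ have accessibility ``from below'', so the corridors you describe are not complementary; top-page arcs of $\gamma_1$ and top-page arcs of $\gamma_2$ can genuinely interlock (e.g.\ when $\sigma$ orders vertices $u,w,v,x$ with a top arc $uv$ in $\gamma_1$ and a top arc $wx$ in $\gamma_2$), and properties~(b) and~(c) say nothing about the top page. Your bridge-edge plan has a related problem: the guarantee that $v_h$'s access interval has only one vertex to its right is a statement about $\gamma_1$ alone and need not survive the merge. The reversal of $P_1$ together with the reflection of $\gamma_2$ removes all of these difficulties at once.
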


\section{Open Problems}


Motivated by the results of this paper we suggest the following open problems:
\begin{inparaenum}[(i)]
\item Investigate whether the lower bound of Theorem~\ref{th:lower-bound} is tight. We recall that an upper bound of $O(n)$ holds for all $n$-colored planar graphs~\cite{DBLP:journals/gc/PachW01}. 
\item Study whether constant curve complexity can always by guaranteed for $4$-colored paths. 
\item Characterize the $3$-colored caterpillars that admit a $3$-colored point-set embedding with constant curve complexity on any given set of points.
\end{inparaenum} 

\bibliography{biblio}
\bibliographystyle{splncs03}

\newpage

\appendix

\section*{Appendix}

\section{Omitted proofs from Section~\ref{se:preliminaries}}

\newcounter{app}
\setcounter{app}{\value{lemma}}
\setcounter{lemma}{0}

\begin{lemma}\label{le:convex}
Let $G$ be a $k$-colored graph, let $S$ be a $k$-colored one-sided convex point set compatible with $G$. If $G$ has a topological $k$-colored point-set embedding on $S$ such that each edge crosses $CH(S)$ at most $b$ times, then $G$ admits a $k$-colored point-set embedding on $S$ with at most $2b+1$ bends per edge.  
\end{lemma}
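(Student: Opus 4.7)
My plan is to derive this lemma directly from Theorem~\ref{th:hamiltonian} by turning the given topological embedding into a $k$-colored Hamiltonian augmentation of $G$ consistent with $seq(S)$ and having at most $b$ division vertices per edge; then the desired bound $2b+1$ on the curve complexity is immediate.

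The construction proceeds as follows. Starting from the topological point-set embedding of $G$ on $S$, I first subdivide every edge $e$ at each of its crossings with $CH(S)$, creating at most $b$ division vertices per edge, each placed at the corresponding crossing point on $CH(S)$. Call the resulting planar graph $\sub(G)$. Because $S$ is in convex position, every original vertex already lies on $CH(S)$, so after subdivision \emph{all} vertices of $\sub(G)$ lie on $CH(S)$, and each edge of $\sub(G)$ is a Jordan arc whose interior is entirely inside or entirely outside $CH(S)$. Next, I augment $\sub(G)$ with dummy edges: in the clockwise order around $CH(S)$, I connect every pair of consecutive points (original or division) by a short arc drawn just inside (or along) $CH(S)$. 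These dummy arcs can be drawn without crossing any existing edge of $\sub(G)$ because the latter either leave the boundary into the interior or hit the boundary only at endpoints on $CH(S)$. The union of these dummy arcs is a Hamiltonian cycle $\C$ visiting every vertex of $\sub(G)$, so $\aug(\sub(G))=\Ham(G)$ is a planar Hamiltonian augmentation of $G$ with at most $b$ division vertices per edge.

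It remains to check that this augmentation is consistent with $seq(S)$. Remove from $\C$ the dummy edge that joins the point of minimum $x$-coordinate to its clockwise predecessor on $CH(S)$; what is left is a Hamiltonian path $\Pa$ whose vertices appear, in order, in the clockwise traversal of $CH(S)$ starting from the point of minimum $x$-coordinate. By the definition of a one-sided convex $k$-colored point set, this clockwise traversal produces exactly $seq(S)$ when restricted to the original vertices, so the ordering along $\Pa$ of the original vertices is consistent with $seq(S)$, as required. Applying Theorem~\ref{th:hamiltonian} with $d=b$ now yields a $k$-colored point-set embedding of $G$ on $S$ with at most $2b+1$ bends per edge.

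The only delicate point is the planar insertion of the Hamiltonian cycle $\C$ along $CH(S)$; I expect this to be the main (though still routine) obstacle, because one must argue that after the subdivision no edge piece of $\sub(G)$ obstructs a thin annular neighborhood of $CH(S)$ in which $\C$ is drawn. This follows from the fact that, between two consecutive points on $CH(S)$, no edge piece meets $CH(S)$ except at those two endpoints, so $\C$ can be routed arbitrarily close to the boundary on the inner side without introducing crossings. Everything else is bookkeeping.
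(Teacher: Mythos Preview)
Your proposal is correct and follows essentially the same approach as the paper: subdivide each edge at its crossings with $CH(S)$, then connect consecutive (real or division) vertices along $CH(S)$ to obtain a $k$-colored Hamiltonian augmentation consistent with $seq(S)$ having at most $b$ division vertices per edge, and apply Theorem~\ref{th:hamiltonian}. Your ``delicate point'' is not really delicate, though: since after subdivision every edge piece of $\sub(G)$ meets $CH(S)$ only at its two endpoints, the Hamiltonian cycle $\C$ can simply be taken to be the boundary $CH(S)$ itself (with the division vertices inserted), so there is no need to route it in an annular neighborhood on either side.
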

\begin{proof}
Replace each crossing between an edge $e$ and $CH(S)$ with a division vertex on $e$ and connect any two (real or division) vertices that are consecutive along $CH(S)$ with an edge if they are not yet connected. The resulting graph is a $k$-colored Hamiltonian augmentation consistent with $seq(S)$ (since $S$ is one-sided convex the circular order of the colors along $CH(S)$ coincides with $seq(S)$) and with at most $b$ division vertices per edge. By Theorem~\ref{th:hamiltonian} $G$ admits a $k$-colored point-set embedding on $S$ with at most $2b+1$ bends per edge.\qed
\end{proof}

\section{Omitted proofs from Section~\ref{se:lower}}

\setcounter{lemma}{3}

\begin{lemma}\label{le:crossed-triplet}
Let $F_n$ be a $3$-sky, $S_n$ be an alternating point set compatible with $F_n$, and $\Gamma_n$ be a $3$-colored topological point-set embedding of $F_n$ on $S_n$ with a root triplet. If $\Gamma_n$ has a leaf triplet $\tau$ that is crossed $c$ times ($c < n$) and each edge crosses $CH(S_{n})$ at most $b$ times, then there exists a subgraph $F_{n'}$ of $F_n$ and a subset $S_{n'}$ of $S_n$ such that: (i) $n' \geq n-c$; (ii) there exists a $3$-colored topological point-set embedding $\Gamma_{n'}$ of $F_{n'}$ on $S_{n'}$ such that each edge crosses $CH(S_{n'})$ at most $b+1$ times; (iii) $\tau$ is an uncrossed leaf triplet of $\Gamma_{n'}$.
\end{lemma}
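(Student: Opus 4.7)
The plan is to identify the leaves responsible for the $c$ crossings at $\tau$, delete the triplets containing them from $S_n$ (together with the corresponding leaves from $F_n$), and argue that the resulting drawing, after some local simplifications, satisfies the crossing bound.

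I would first enumerate the at most $c$ distinct edges of $\Gamma_n$ that contribute a crossing on one of the two arcs of $\tau$. Each such edge joins a root (in the root triplet) to a leaf of the same color; let $L$ be the set of their leaf endpoints, so $|L|\leq c$. Every leaf in $L$ lies in some triplet of $S_n$, and since no root is a leaf, none of these triplets is the root triplet. Let $\mathcal{T}$ be the family of triplets containing a leaf of $L$, with $|\mathcal{T}|\leq c$. A subtle sub-case is when $\tau\in\mathcal{T}$: up to three edges may have their leaf endpoint at a vertex of $\tau$ (one per color), and those cannot be eliminated by deleting $\tau$. For such edges I plan to apply a local reroute that keeps them inside $CH(S_{n'})$ near $\tau$; once the other triplets of $\mathcal{T}$ have been removed, the vicinity of $\tau$ is free of obstructions, so the reroute can be carried out without introducing new crossings with remaining edges.

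I would then remove from $S_n$ all points contained in triplets of $\mathcal{T}\setminus\{\tau\}$ and from $F_n$ all leaves whose point is removed. This produces an alternating point set $S_{n'}$ and a $3$-sky $F_{n'}$ with $n'\geq n-c$, giving~(i). By construction, every edge that contributed a crossing on an arc of $\tau$ either disappears (its leaf was removed) or is rerouted off those arcs, so both arcs of $\tau$ in $CH(S_{n'})$ become uncrossed, giving~(iii).

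It remains to verify~(ii). The new hull $CH(S_{n'})$ differs from $CH(S_n)$ by replacing, for each deleted triplet, the four arcs incident to its removed points with a single chord. For each surviving edge $e$ I would keep its drawing as in $\Gamma_n$ and bound its crossings with $CH(S_{n'})$: crossings with surviving arcs of $CH(S_n)$ are already at most $b$, so only the crossings with the newly exposed chords need to be controlled. I would apply a topological simplification: whenever $e$ enters and leaves a pocket (the region bounded by a new chord and the corresponding deleted arc of the old hull) through the same chord, the excursion can be straightened inside $CH(S_{n'})$ because the pocket contains no surviving vertex. After these simplifications the extra crossings added by the new chords total at most $1$ per surviving edge, yielding the bound $b+1$. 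The main obstacle I anticipate is precisely this last step: the bound $b+1$ does not follow from a purely local, pocket-by-pocket argument (which would give up to two extra crossings per pocket), so the simplification must be carried out globally, exploiting the absence of surviving vertices in the pocket regions; a secondary subtlety is verifying that the local reroutes for edges whose leaf lies in $\tau$ preserve planarity against all remaining edges.
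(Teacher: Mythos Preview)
Your plan coincides with the paper's: delete every triplet (other than $\tau$ itself) that contains the leaf of an edge crossing $\tau$, and if some crossing edge has its leaf in $\tau$, locally redraw the three edges incident to $\tau$. The divergence is in where you spend the extra $+1$.

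In the paper the $+1$ arises solely from the local reroute: a small closed curve $C$ is taken around $\tau$, the three incident edges are cut at their first meeting with $C$, and the three cut points are reconnected to the leaves of $\tau$ inside $C$; this reconnection may cost one additional hull crossing per edge. Removing triplets, by contrast, is asserted to cost nothing. Your worry about the new chords is legitimate but over-budgeted: once you straighten every excursion that enters and leaves a pocket through the same new chord (valid since each pocket is vertex-free), every remaining crossing of a surviving edge with a new chord is paired with a crossing of one of the deleted old arcs bounding that pocket, so summing over pockets the total hull-crossing count does not increase. Thus the hull change costs $0$, not $+1$. You should reserve the $+1$ for the reroute step; your claim that the reroute can be kept entirely inside $CH(S_{n'})$ near $\tau$ is where the genuine gap lies, since an edge incident to $\tau$ may arrive at its leaf from the exterior side of the hull, and flipping it to the interior without any hull crossing is not automatic.
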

\begin{proof}
The idea is to make $\tau$ uncrossed by removing all edges that cross it. In order to keep the set of points alternating, for each removed edge $e$ we will remove the whole triplet that contains the point representing the leaf of $e$. Suppose first that none of the triplets to be removed coincide with $\tau$ (i.e., no edge that crosses $\tau$ has an endpoint in $\tau$). In this case, we remove all edges that cross $\tau$ and all the triplets containing their leaves. Notice that, since $\Gamma_n$ has a root triplet, no root is removed. Since we always remove triplets, the final drawing has the same number of points for each color. Thus, such a drawing is a $3$-colored topological point-set embedding $\Gamma_{n'}$ of a $3$-sky $F_{n'}$ on an alternating point set $S_{n'}$ with an uncrossed triplet $\tau$. The number of triplets removed is at most $c$, and therefore $n' \geq n-c$. Finally, since we have only removed edges, the number of times that an edge crosses $CH(S_{n'})$ remains $b$.

Suppose now that at least one edge that crosses $\tau$ has an endpoint in $\tau$. Denote by $\overline{E}$ the set of these edges. We first remove all edges that cross $\tau$ but are not in $\overline{E}$. Then we redraw the edges of $\overline{E}$ so that they do not cross $\tau$. To this aim, consider a closed curve $C$ around $\tau$ that only intersects edges that are incident to or cross $\tau$. 
%
%
We cut the three edges incident to the points of $\tau$ (these three edges include those in $\overline{E}$, possibly coinciding with them) at the point where they cross $C$ for the first time. We then connect the three points on the curve $C$ to the points of $\tau$. One can observe that this can always be done by adding to each edge at most one crossing of $CH(S_{n})$ (see Figure~\ref{fi:cross} for an illustration). Also in this case the resulting drawing is a $3$-colored topological point-set embedding $\Gamma_{n'}$ of a $3$-sky $F_{n'}$ on an alternating point set $S_{n'}$ with an uncrossed triplet $\tau$. Again, the number of triplets removed is at most $c$, and therefore $n' \geq n-c$. 
\qed
\end{proof}

\begin{figure}[tb]
	\centering
	\begin{minipage}[b]{.3\textwidth}
		\centering
		\includegraphics[width=\textwidth]{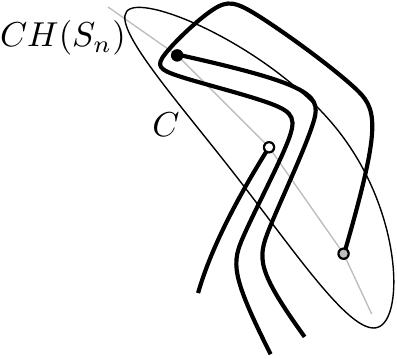}
		\subcaption{}\label{fi:cross1}
	\end{minipage}
	\hfil
	\begin{minipage}[b]{.3\textwidth}
		\centering
		\includegraphics[width=\textwidth]{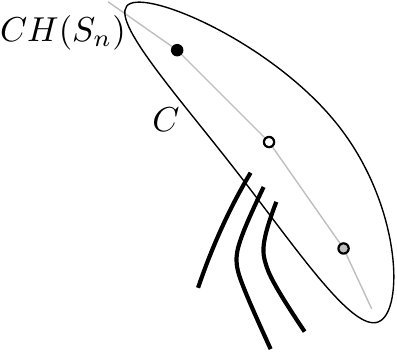}
		\subcaption{}\label{fi:cross2}
	\end{minipage}
	\hfil
	\begin{minipage}[b]{.3\textwidth}
		\centering
		\includegraphics[width=\textwidth]{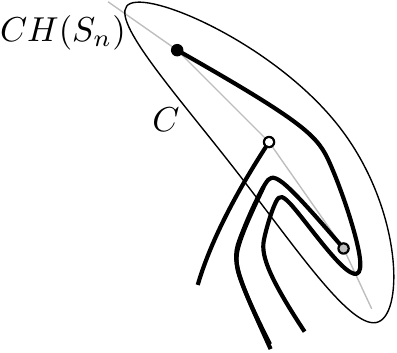}
		\subcaption{}\label{fi:cross3}
	\end{minipage}	
	\caption{\label{fi:cross} Re-drawing of the edges incident to a leaf triplet to make it uncrossed.}
\end{figure}

\begin{lemma}\label{le:root-triplet}
Let $F_n$ be a $3$-sky, $S_n$ be an alternating point set compatible with $F_n$, and $\Gamma_n$ be a $3$-colored topological point-set embedding of $F_n$ on $S_n$. If each edge of $\Gamma_n$ crosses $CH(S_n)$ at most $b$ times, then there exists a subgraph $F_{n'}$ of $F_n$ and a subset $S_{n'}$ of $S_n$ such that: (i) $n' \geq \frac{n}{3}-3$; (ii) there exists a $3$-colored topological point-set embedding $\Gamma_{n'}$ of $F_{n'}$ on $S_{n'}$ such that each edge crosses $CH(S_{n'})$ at most $b+2$ times; (iii) $\Gamma_{n'}$ has a root triplet.
\end{lemma}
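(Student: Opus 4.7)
The plan is to construct $F_{n'}$, $S_{n'}$, and $\Gamma_{n'}$ by selecting a target triplet $\tau = (p_0, p_1, p_2) \subset S_n$ that will become the root triplet of $\Gamma_{n'}$, and then rerouting the edges of $F_n$ so that each root $q_i$ ends up at the point $p_i$.

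To select $\tau$, I analyze the cyclic arrangement of the three roots along $CH(S_n)$. The roots partition the remaining $3n-3$ points into three arcs; the cyclic order of the root colors is a cyclic shift of either $(0,1,2)$ or $(0,2,1)$. In either case I choose $\tau$ to contain the ``middle'' root $q_j$ of the cyclic order, i.e.\ $\tau$ is the triplet of $S_n$ whose color-$j$ point is $q_j$ itself; only the two outer roots $q_i, q_k$ (with $\{i,j,k\} = \{0,1,2\}$) then need to be relocated. A direct enumeration of the cyclic orderings shows that the corresponding relocation pairs $(p_i, r_i)$ and $(p_k, r_k)$ do not interleave along $CH(S_n)$, so the two bundles can be routed on opposite sides of $\tau$ in the exterior of $CH(S_n)$ without crossing each other.

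For each relocated root $q_i$, the abstract vertex $q_i$ is assigned in $\Gamma_{n'}$ to the point $p_i$, swapped with the leaf $\ell_i^*$ of $T_i$ originally at $p_i$. The edge $(q_i, \ell_i^*)$ simply reuses its original curve in $\Gamma_n$ from $r_i$ to $p_i$, now carrying the swapped endpoint labels. Every other edge $(q_i, \ell)$ is replaced by the concatenation of a \emph{bundle chord}, routed through the exterior of $CH(S_n)$ from $p_i$ to a point immediately adjacent to $r_i$ (contributing at most $2$ crossings with $CH(S_n)$), with the original curve from $r_i$ to $\ell$ (at most $b$ crossings). Hence each edge of $\Gamma_{n'}$ crosses $CH(S_{n'})$ at most $b+2$ times, and the three roots of $F_{n'}$ sit at $\tau$, forming the required root triplet. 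Taking $S_{n'} = S_n$ and $F_{n'} = F_n$ already gives $n' = n \geq n/3 - 3$; alternatively, one may reduce by removing up to $\lfloor 2n/3 \rfloor + 3$ triplets (e.g., triplets lying outside one ``clean'' arc chosen to avoid planarity obstructions), giving the stated $n/3 - 3$ bound.

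The hardest part will be verifying planarity of the bundle chords. Non-crossing of the two bundles follows from the non-interleaving property established above. However, the bundles must also avoid the exterior segments of pre-existing edges of $\Gamma_n$; I expect to handle this by routing each bundle tightly along $CH(S_n)$ so that it remains ``below'' any exterior excursion of a pre-existing edge, invoking the planarity of $\Gamma_n$ to guarantee enough free space close to $CH(S_n)$, and if necessary discarding a constant number of obstructing leaves and their triplets --- the ``$-3$'' slack in $n' \geq n/3 - 3$ is precisely what absorbs such localized removals.
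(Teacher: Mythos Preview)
Your approach attempts a global rerouting --- moving two roots to a chosen target triplet $\tau$ via ``bundle chords'' through the exterior of $CH(S_n)$ --- whereas the paper takes the opposite route: it \emph{removes} roughly two thirds of the triplets so that the three root points become consecutive on the new hull, and only then performs a purely local reroute inside a small neighbourhood of those three points.

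The decisive gap in your plan is the planarity of the bundle chords. Your claim that routing the bundle ``tightly along $CH(S_n)$'' keeps it below every exterior excursion of an existing edge does not hold. An exterior arc of an edge $e$ of $\Gamma_n$ has both of its endpoints \emph{on} $CH(S_n)$; if one endpoint lies on the clockwise arc from $p_i$ to $r_i$ and the other lies on the complementary arc, then that exterior arc separates $p_i$ from $r_i$ in the exterior region, and any curve in the exterior joining $p_i$ to $r_i$ must cross it --- there is no room ``below'' a curve whose endpoint sits on the hull itself. Nothing in the hypothesis bounds the number of such separating arcs by a constant: in the worst case there can be $\Theta(n)$ of them (for instance, every leaf lying between $p_i$ and $r_i$ whose unique incident edge begins with an exterior segment reaching the complementary arc contributes one). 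The ``$-3$'' slack you invoke is therefore far from sufficient; removing the obstructing leaves and their triplets could force you to discard a linear number of triplets on the very arc you are trying to keep.

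Your parenthetical alternative --- removing up to $\lfloor 2n/3\rfloor+3$ triplets outside one ``clean'' arc --- is in fact the correct idea and essentially what the paper does, not a fallback for the size bound. The paper keeps the largest of the three arcs determined by the (at most three) triplets containing roots, deletes all other triplets together with the non-root points in the root-containing triplets, so that the three roots become consecutive on $CH(S_{n'})$; only then does it reroute inside a small closed curve enclosing just those three points to put the root colours into the cyclic order $0,1,2$. Because that reroute is confined to a tiny region, each edge gains at most two extra crossings with $CH(S_{n'})$, and no global exterior routing is ever needed.
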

\begin{proof}
If $\Gamma_n$ has a root triplet, then $\Gamma_{n'}$ coincides with $\Gamma_n$ and the thesis holds. Suppose then that $\Gamma_{n}$ does not have a root triplet. Consider the triplets that contain the roots. They are at most three of them. Denote them as $\tau_1$, $\tau_2$ and $\tau_3$. Denote by $T_{i,i+1}$ (for $i=0,1,2$, indices taken modulo $3$) be the set of triplets that are encountered between $\tau_i$ and $\tau_i+1$ when moving clockwise along $CH(S_{n})$. The number of triplets in $T_{0,1}$, $T_{1,2}$, and $T_{2,0}$ is $n-3$ and thus at least one of these three sets has $\frac{n-3}{3}$ triplets. We remove all the triplets of the other two sets. Furthermore we remove the points of the triplets containing the roots that do not represent the roots. This removes at most six extra vertices. The three roots are now consecutive. If they form a triplet, i.e., they are colored $0$, $1$, $2$ in the clockwise order, then the drawing obtained after the removal is the desired $\Gamma_{n'}$. In $\Gamma_{n'}$, there are at least $\frac{n-3}{3}$ triplets plus the root triplet, hence $n'\geq\frac{n-3}{3}+1=\frac{n}{3}$.

\begin{figure}[tb]
	\centering
	\begin{minipage}[b]{.35\textwidth}
		\centering
		\includegraphics[width=\textwidth,page=3]{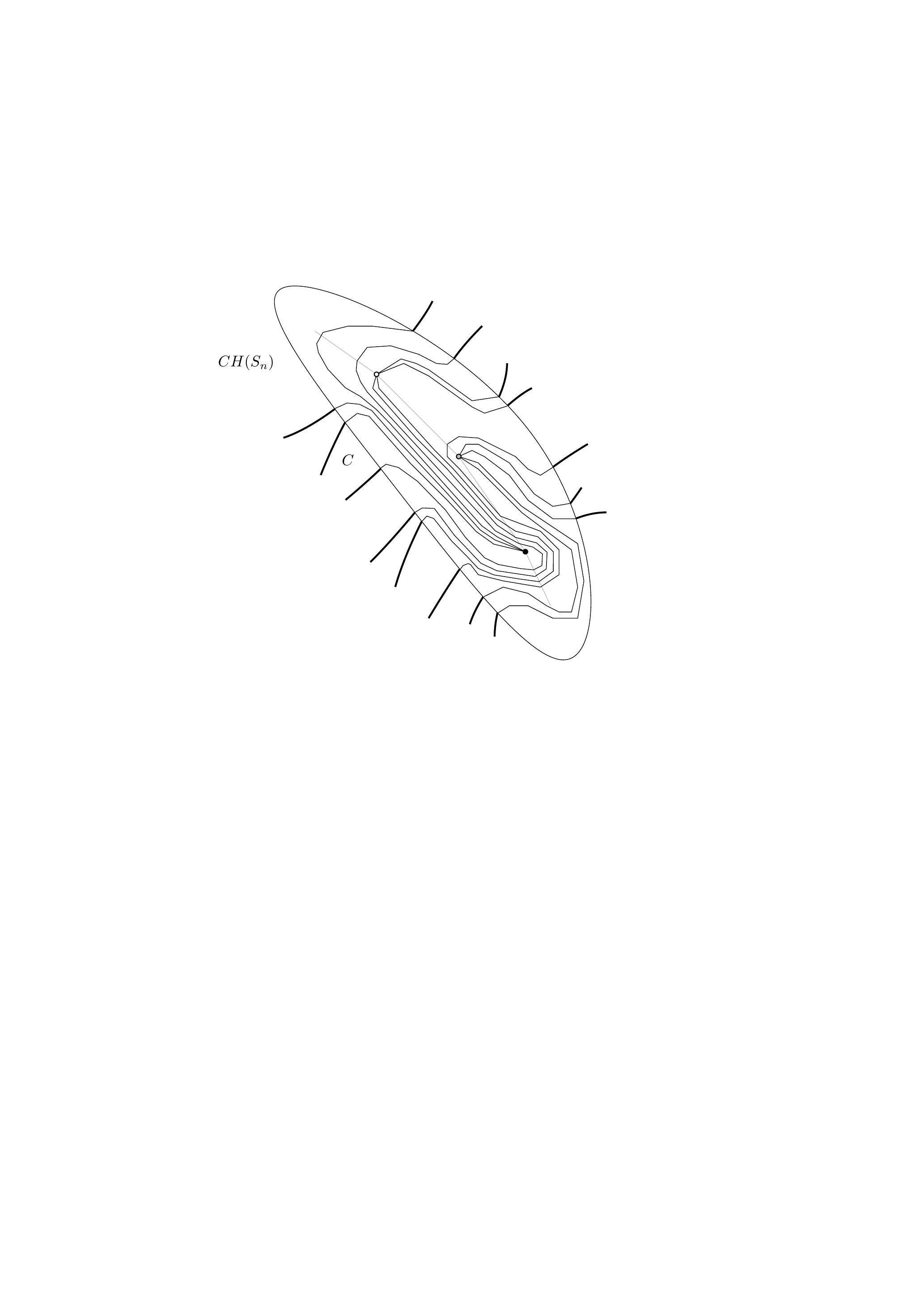}
		\subcaption{}\label{fi:root-triplet}
	\end{minipage}
	\hfil
	\begin{minipage}[b]{.35\textwidth}
		\centering
		\includegraphics[width=\textwidth,page=2]{figures/root-triplet}
		\subcaption{}\label{fi:root-triplet}
	\end{minipage}
	\caption{Re-drawing of the edges incident to and crossing the root triplet, in order to change the seqeunce of colors.\label{fi:cycle}}
\end{figure}

If the three roots do not form a triplet, we locally modify the drawing to transform them into a triplet. To this aim, we consider a closed curve $C$ around the three roots that is crossed only by the edges incident to the three roots and by the edges that cross $CH(S_n)$ between them. It is possible to reroute the edges crossing $C$ so as to guarantee that the clockwise order of the root colors is $0$, $1$, $2$. This can be done by adding to each edge at most two crossing of $CH(S_{n})$ (see Figure~\ref{fi:root-triplet} for an illustration). After the rerouting, we are in the same situation as in the previous case and the resulting drawing is the desired $\Gamma_{n'}$ with $n'\geq \frac{n}{3}$.
\qed 
\end{proof}

\begin{lemma}\label{le:lower-bound}
Let $h$ be a positive integer, $F_n$ be a $3$-sky for $n = 520710h^3$, and $S_n$ be an alternating point set compatible with $F_n$. In every $3$-colored point-set embedding of $F_n$ on $S_n$ there exist at least $h^2$ edges with more than $h$ bends.
\end{lemma}
\begin{proof}
For $i=1,2,\dots,h^2$ let $F_{n_i}$ be a $3$-sky for $n_i=520689 h^3+21h \cdot i$ and let $S_{n_i}$ be an alternating point set compatible with $F_{n_i}$. We prove by induction on $i$ that in every $3$-colored point-set embedding of $F_{n_i}$ on $S_{n_i}$ there exist $i$ edges with more than $h$ bends.
Notice that for $i=h^2$, we have $n_i=n$.

\textbf{Base case: $i=1$:} We have to prove that in any $3$-colored point-set embedding of $F_{n_1}$ on $S_{n_1}$ with $n_1=520689 h^3+21h$, there exists one edge with more than $h$ bends. 
Suppose as a contradiction that there exists a $3$-colored point-set embedding $\Gamma_{n_1}$ of $F_{n_1}$ on $S_{n_1}$ with curve complexity $h$. $\Gamma_{n_1}$ is also a $3$-colored topological point-set embedding of $G_{n_1}$ on $S_n$ such that each edge crosses $CH(S_{n_1})$ at most $2h$ times (each edge consists of at most $h+1$ segments). By Lemma~\ref{le:root-triplet} there exists a $3$-colored point-set embedding $\Gamma_{n'}$ of a $3$-sky $F_{n'}$ on an alternating point set $S_{n'}$ such that: (i) $n' \geq \frac{n_1}{3}$; (ii) each edge of $\Gamma_{n'}$ crosses $CH(S_{n'})$ at most $2h+2$ times; (iii) $\Gamma_{n'}$ has a root triplet. 

Since each edge of $\Gamma_{n'}$ crosses $CH(S_{n'})$ at most $2h+2$ times and there are  $3(n'-1)\geq n_1-3$ edges in total, there are at most $(2h+2)(n_1-3)$ crossings of $CH(S_{n_1})$ in total. The number of leaf triplets in $\Gamma_{n'})$ is $n'-1 \geq \frac{n_1}{3}-1$. It follows that there is at least one leaf triplet $\tau$ crossed at most $\frac{3(2h+2)(n_1-3)}{(n_1-3)} = 6h+6 \leq 7h$ times. By Lemma~\ref{le:crossed-triplet} there exists a $3$-colored point-set embedding $\Gamma_{n''}$ of a $3$-sky $F_{n''}$ on an alternating point set $S_{n''}$ such that: (i) $n'' \geq n'- 7h$; (ii) each edge of $\Gamma_{n''}$ crosses $S_{n''}$ at most $2h+3$ times; (iii) $\tau$ is uncrossed. By Lemma~\ref{le:uncrossed-triplet}, the $3$-fan $G_{n''}$ has a $3$-colored topological point-set embedding on $S_{n''}$ such that each edge crosses $CH(S_{n''})$ at most $6h+11$ times and by Lemma~\ref{le:convex} a $3$-colored point set embedding with curve complexity at most $12h+23$. On the other hand, since $n_1 = 520689 h^3+21h$, we have that $n'' \geq n'-7h \geq \frac{n_1}{3}-7h = \frac{520689 h^3 +21 h}{3}-7h=\frac{520689}{3}h^3 \geq \frac{520689}{3}h^3=79 (13h)^3$ and by Theorem~\ref{th:gn}, in every $3$-colored point-set embedding of $G_{n''}$ on $S_{n''}$ at least one edge that has more than $13h$ bends -- a contradiction.

\textbf{Inductive step: $i>1$.} We have to prove that in any $3$-colored point-set embedding of $F_{n_i}$ on $S_{n_i}$ with $n_i=520689 h^3+21h \cdot i$, there exist $i$ edges with more than $h$ bends.

We first prove that there exists at least one edge with more than $h$ bends. As in the base case, suppose as a contradiction that there exists a $3$-colored point-set embedding $\Gamma_{n_i}$ of $F_{n_i}$ on $S_{n_i}$ with curve complexity $h$. By Lemma~\ref{le:root-triplet} there exists a $3$-colored point-set embedding $\Gamma_{n'}$ of a $3$-sky $F_{n'}$ on an alternating point set $S_{n'}$ such that: (i) $n' \geq \frac{n_i}{3}$; (ii) each edge of $\Gamma_{n'}$ crosses $CH(S_{n'})$ at most $2h+2$ times; (iii) $\Gamma_{n'}$ has a root triplet.

With the same argument used in the base case, we conclude that in $\Gamma_{n'}$ there is at least one leaf triplet $\tau$ crossed at most $7h$ times. By Lemma~\ref{le:crossed-triplet} there exists a $3$-colored point-set embedding $\Gamma_{n''}$ of a $3$-sky $F_{n''}$ on an alternating point set $S_{n''}$ such that: (i) $n'' \geq n'- 7h$; (ii) each edge of $\Gamma_{n''}$ crosses $S_{n''}$ at most $2h+3$ times; (iii) $\tau$ is uncrossed. By Lemmas~\ref{le:uncrossed-triplet} and~\ref{le:convex} the $3$-fan $G_{n''}$ has a $3$-colored point set embedding with curve complexity at most $12h+23$. On the other hand, since $n_i = 520689 h^3+21h \cdot i$, we have that $n'' \geq n'-7h \geq \frac{n_i}{3}-7h = \frac{520689 h^3 +21 h\cdot i}{3}-7h=\frac{520689}{3}h^3+7h(i-1) \geq \frac{520689}{3}h^3=79 (13h)^3$ and by Theorem~\ref{th:gn}, in every $3$-colored point-set embedding of $G_{n''}$ on $S_{n''}$ at least one edge has more than $13h$ bends -- again a contradiction.

This proves that there is at least one edge $e$ crossed more than $h$ times. We now remove this edge and the whole triplet that contains the point representing the leaf of $e$. We then arbitrarily remove $21h-1$ triplets. The resulting drawing is a $3$-colored point-set embedding $\Gamma_{n'''}$ of $F_{n'''}$ on $S_{n'''}$ for $n'''=n_{i-1}$. By induction, it contains $i-1$ edges each having more than $h$ bends. It follows that $\Gamma_{n_i}$ has $i$ edges each having more than $h$ bends. Since for $i=h^2$ we have $n_i=n$, the statement follows.
\qed  
\end{proof}

\newcounter{app2}
\setcounter{app2}{\value{theorem}}

\setcounter{theorem}{3}

\begin{theorem}
Let $F$ be a $k$-colored forest of $h$ stars and $S$ be a set of points compatible with $F$. If $\max\{k,h\}=2$ then $F$ has a $k$-colored point-set embedding on $S$ with curve complexity at most $2$.
\end{theorem}  
\begin{proof}
Suppose first that $k=2$. In this case we connect the roots of the stars and obtain a $2$-colored caterpillar, which admits a $2$-colored point-set embedding on any $2$-colored set of points with curve complexity two~\cite{DBLP:journals/ijfcs/GiacomoLT06}.
Suppose now that $h=2$. It is immediate to see that a star admits a book embedding on one page (i.e. a topological book embedding such that all edges are in the same page) for any chosen ordering of the vertices. Thus, using a distinct page for each of the two stars, we can obtain a topological book embedding consistent with $seq(S)$, for any $S$, without spine crossings. By Lemma~\ref{le:top-pse} $F$ has a $k$-colored point-set embedding on $S$ with curve complexity $1$.\qed    
\end{proof}

\section{Omitted proofs from Section~\ref{se:upper}}

\setcounter{lemma}{6}
\begin{lemma}\label{le:twin-chunks}
Let $P$ and $\sigma$ be a minimally balanced pair of length $k>1$. Let $b_j(P)$  denote the $j$-th bit of $P$ and let $b_j(\sigma)$ denote the $j$-th bit of $\sigma$. Then  $b_1(P) \neq b_k(P)$, $b_k(P) = b_1(\sigma)$, and $b_1(P) = b_k(\sigma)$.
\end{lemma}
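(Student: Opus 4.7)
The plan is to introduce a discrepancy function and argue that its sign is forced to be constant on the interior, which will pin down the first and last bits of both $P$ and $\sigma$.

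Concretely, for $0\le j\le k$ I will let $n^P(j)$ denote the number of $0$'s among $b_1(P),\dots,b_j(P)$ and $n^\sigma(j)$ denote the number of $0$'s among $b_1(\sigma),\dots,b_j(\sigma)$, and define the discrepancy $d_j=n^P(j)-n^\sigma(j)$. Since $P$ and $\sigma$ have the same length and the same number of $0$'s, we have $d_0=d_k=0$; since they are minimally balanced and because having equal $0$-counts in a prefix is equivalent to having equal $1$-counts there, the minimality assumption becomes the clean statement $d_j\neq 0$ for all $1\le j\le k-1$.

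The next step is to observe that as $j$ increases by $1$ the quantity $d_j$ changes by $+1$, $0$, or $-1$, depending on the pair $(b_j(P),b_j(\sigma))$: specifically $\Delta d_j=+1$ iff $(b_j(P),b_j(\sigma))=(0,1)$, $\Delta d_j=-1$ iff $(b_j(P),b_j(\sigma))=(1,0)$, and $\Delta d_j=0$ when $b_j(P)=b_j(\sigma)$. Combined with $d_0=0$ and $d_j\ne 0$ for $1\le j\le k-1$, this forces $d_j$ to keep a constant (nonzero) sign on the interior, because a sign change of a $\pm 1$-step walk would require a zero crossing.

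From here the three equalities fall out by inspecting the first and the last step of the walk. Suppose $d_j>0$ on the interior (the other case is symmetric). Then $d_1>0$ forces $\Delta d_1=+1$, i.e.\ $b_1(P)=0$ and $b_1(\sigma)=1$; and $d_{k-1}>0$ together with $d_k=0$ forces $\Delta d_k=-1$, i.e.\ $b_k(P)=1$ and $b_k(\sigma)=0$. Thus $b_1(P)=0\ne 1=b_k(P)$, $b_k(P)=1=b_1(\sigma)$, and $b_1(P)=0=b_k(\sigma)$, as claimed. I expect the only mildly delicate point to be the careful bookkeeping that ``minimally balanced'' really translates into $d_j\ne 0$ on the interior (via the equivalence of the $0$-count and $1$-count conditions under equal prefix lengths); the rest is a routine discrete intermediate-value argument.
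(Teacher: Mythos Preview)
Your proof is correct and uses essentially the same approach as the paper: both define the same discrepancy function (the paper calls it $\Delta_l$), use the $\pm 1$-step property, and apply a discrete intermediate-value argument. Your organization is slightly more streamlined---you derive all three conclusions directly from the constant sign of $d_j$ on the interior, whereas the paper first proves $b_1(P)\neq b_1(\sigma)$ and $b_k(P)\neq b_k(\sigma)$ by separate ad hoc observations and then uses the walk argument only for the remaining claim---but the mathematical content is the same.
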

\begin{proof}
We first observe that $b_1(P) \neq b_1(\sigma)$ as otherwise $b_1(P)$ and $b_1(\sigma)$ would be balanced prefixes of $P$ and of $\sigma$, respectively, thus contradicting the fact that $P$ and $\sigma$ are a minimally balanced pair. Also, $b_k(P) \neq b_k(\sigma)$ or else the prefixes obtained by removing $b_k(P)$ from $P$ and $b_k(\sigma)$ from $\sigma$ would be balanced, again contradicting the fact that $P$ and $\sigma$ are a minimally balanced pair. 

Assume now by contradiction that $b_1(P) = b_k(P)$, which would also imply that $b_k(P) \neq b_1(\sigma)$ (since $b_1(P) \neq b_1(\sigma)$). Denote by $\Delta_l$ the difference between the number of $0$'s in the string $b_1(P)b_2(P) \ldots b_l(P)$ and the number of $0$'s in the string $b_1(\sigma)b_2(\sigma) \ldots b_l(\sigma)$ (for $l=1,2,\dots,k$). Assume w.l.o.g. that $b_1(P)=0$, then $b_1(\sigma)=1$ and therefore $\Delta_1=1$. On the other hand, since $P$ and $\sigma$ are balanced, $\Delta_k=0$. Since $b_k(P) \neq b_k(\sigma)$, $\Delta_{k-1} \neq 0$ and since $b_1(P) = b_k(P)$, it is $b_k(P)=0$, which implies $\Delta_{k-1}=-1$. The value of $\Delta_l$ changes by at most one unit (positively or negatively) when the index $l$ is increased by one unit. Thus, $\Delta_1=1$ and $\Delta_{k-1}=-1$ imply that there must exist an index $j$, with $1 < j < k-1$, where $\Delta_j=0$. But this would imply the existence of a prefix of $P$ and a corresponding prefix of $\sigma$ that are balanced -- again a contradiction.       

It remains to show that $b_1(P) = b_k(\sigma)$. But this is an immediate consequence of the fact that $b_1(P) \neq b_k(P)$ and $b_k(P) \neq b_k(\sigma)$.\qed
\end{proof}

\setcounter{theorem}{5}
\begin{theorem}\label{th:caterpillar}
Let $C$ be a $3$-colored caterpillar with monochromatic leaves and let $S$ be a 3-colored point set compatible with $C$. $P$ has a $3$-colored point-set embedding on $S$ with curve complexity at most $5$.
\end{theorem}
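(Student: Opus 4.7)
The plan is to build a topological book embedding of $C$ consistent with $\sigma = seq(S)$ having at most two spine crossings per edge; Lemma~\ref{le:top-pse} then yields the claimed $3$-colored point-set embedding with curve complexity at most $2\cdot 2+1=5$. Let $c$ denote the color shared by all leaves of $C$, let $P_{spine}$ denote its spine path, and let $L$ denote the total number of leaves. The idea is to extend the construction in the proof of Lemma~\ref{le:3-colored-book-embedding}: since all leaves share color $c$, I treat each leaf as an additional color-$c$ vertex to be inserted via the image-point-and-matching-edge technique that lemma uses for color-$c_2$ vertices. First I would remove $L$ suitably chosen positions of color $c$ from $\sigma$ to obtain a $3$-colored sequence $\tau$ compatible with $P_{spine}$, and then invoke Lemma~\ref{le:3-colored-book-embedding} to build a topological book embedding $\gamma_{spine}$ of $P_{spine}$ consistent with $\tau$ having at most two spine crossings per edge. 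This determines the positions on the book spine of all non-leaf vertices of $C$.

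Next, I would reintroduce at the $L$ removed positions a color-$c$ point per removed position, thereby restoring $\sigma$ along the spine, and attach each such point as a leaf of its assigned spine vertex. For each leaf $w$ placed at position $q_w$ and attached to a spine vertex $u$, I would place an image point $p_w$ on the book spine very close to $u$ in a visible-from-below segment adjacent to $u$; such segments exist thanks to the visibility properties of $\gamma_{spine}$ inherited through Lemma~\ref{le:3-colored-book-embedding} from properties (b) and (d) of Lemma~\ref{le:2-colored-book-embedding}. The leaf edge $(u,w)$ is realized as the concatenation of a short top-page arc from $u$ to $p_w$ (no spine crossing) with a bottom-page arc from $p_w$ to $q_w$; the resulting edge crosses the spine exactly once, at $p_w$. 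Since spine edges of $C$ already have at most two spine crossings in $\gamma_{spine}$, every edge of the resulting topological book embedding of $C$ has at most two spine crossings.

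The hard part is making all the choices, namely which $L$ positions of $\sigma$ become leaf positions, which leaf is attached to which spine vertex, and where to place each image point $p_w$, in such a way that all of the bottom-page arcs (those already in $\gamma_{spine}$, the matching arcs arising within Lemma~\ref{le:3-colored-book-embedding}'s construction, and the new leaf arcs from $p_w$ to $q_w$) can be drawn mutually non-crossing. I would resolve this by performing a bracket-matching scan of the color-$c$ positions of $\sigma$ analogous to the one used in Lemma~\ref{le:3-colored-book-embedding}: while maintaining, for each spine vertex, its remaining leaf quota, each color-$c$ position is paired with a spine vertex in a nested fashion, which guarantees that the resulting arcs admit a non-crossing drawing below the spine. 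Finally, the degenerate case in which $c$ is an endpoint color of $P_{spine}$ (so that Lemma~\ref{le:3-colored-book-embedding} internally uses $c_2 \neq c$) requires no new idea: the first stage is still handled by that lemma, and the leaf-insertion argument of the second stage relies only on visibility near spine vertices and so goes through unchanged.
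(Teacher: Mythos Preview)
Your high-level plan---build a topological book embedding of $C$ consistent with $seq(S)$ having at most two spine crossings per edge, then apply Lemma~\ref{le:top-pse}---is exactly the paper's. The gap is in your two-stage reinsertion. You first apply Lemma~\ref{le:3-colored-book-embedding} to the backbone $P_{spine}$ and only afterwards try to attach the leaves, relying on the existence, adjacent to every backbone vertex $u$, of a spine segment that is visible from below. But Lemma~\ref{le:3-colored-book-embedding} asserts no such property for its \emph{output}: properties (b)--(d) of Lemma~\ref{le:2-colored-book-embedding} hold for the intermediate $2$-colored embedding $\gamma'$ used inside the proof, not for the final $3$-colored embedding. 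The last step of Lemma~\ref{le:3-colored-book-embedding} adds matching arcs in the bottom page from image points (near spine crossings) to the $Q$-points (between consecutive vertices of $\gamma'$), and such an arc may well pass underneath a vertex $u$ of $\gamma'$; once that happens there is no point adjacent to $u$ that is visible from below, and you cannot place $p_w$ as described. The same obstruction undermines your final bracket-matching step: the new bottom-page arcs $p_w\to q_w$ must avoid the already-fixed bottom-page arcs of $\gamma_{spine}$, and you give no mechanism ensuring this once the two rounds of matching are stacked.

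The paper sidesteps the problem by performing a \emph{single} reinsertion rather than two. It removes all vertices of the leaf colour $c_2$ at once---leaves and backbone vertices alike---obtaining a $2$-colored path $P'$, applies Lemma~\ref{le:2-colored-book-embedding} directly, and then replaces each dummy edge $(u_1,u_2)$ not by a path (as in Lemma~\ref{le:3-colored-book-embedding}) but by the corresponding sub-caterpillar $C'$, drawn as a $1$-page book embedding on the image points placed at the first spine crossing of $(u_1,u_2)$. A single round of bracket matching then connects all image points to the $Q$-points; image points representing backbone vertices are split into two arcs as in the path case, while image points representing leaves simply keep their matching arc as the final leaf edge. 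Because only one layer of bottom-page matching is ever laid down, properties (b) and (c) of Lemma~\ref{le:2-colored-book-embedding} are still intact at the moment they are needed.
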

\begin{proof}
As in the case of $3$-colored paths, we first remove from $C$ the vertices of one color. In particular, we remove the vertices with the color of the leaves. Let $c_2$ be this color. The leaves are simply removed from $C$ while the subpaths of the backbone\footnote{The path obtained from a caterpillar by removing all its leaves is usually called \emph{spine}. To avoid confusion with the spine of the book embedding, we call it \emph{backbone}.} of $C$ whose color is $c_2$ are replaced by dummy edges as in the case of paths. Let $P'$ be the resulting $2$-colored path and let $\sigma'$ be the $2$-colored sequence obtained from $\sigma$ by removing all elements of color $c_2$.

By Lemma~\ref{le:2-colored-book-embedding}, $P'$ admits a topological book embedding $\gamma'$  consistent with $\sigma'$ that satisfies the properties (a), (b), (c), (d). As in the proof of Lemma~\ref{le:3-colored-book-embedding} we add to $\gamma'$ a set $Q$ of points colored $c_2$ placed so that the sequence of colors along the spine coincides with $\sigma$. By property (b) of $\gamma'$ all these points can be placed so that they are accessible from below. 
We now have to remove the dummy edges of $P'$ and re-insert the vertices colored $c_2$. The approach is very similar to the one used for paths. The difference is that each dummy edge has to be replaced  with a subgraph that is a caterpillar.

\begin{figure}[tb]
	\centering
	\begin{minipage}[b]{.48\textwidth}
		\centering
		\includegraphics[width=\textwidth,page=1]{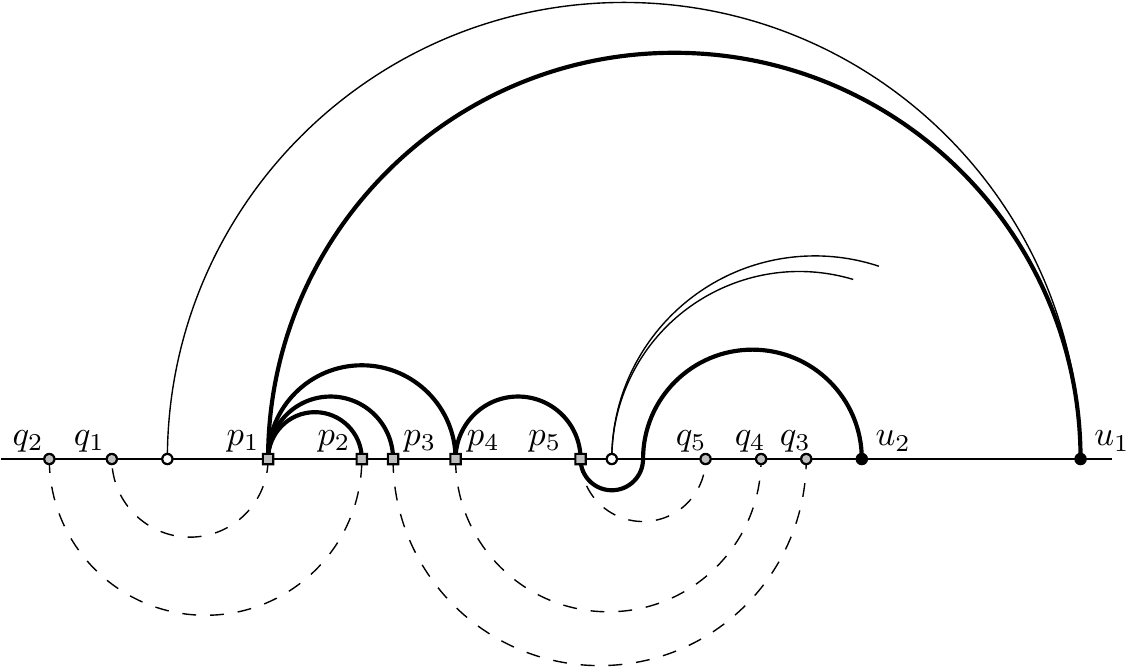}
		\subcaption{}\label{fi:matching-replacement-2}
	\end{minipage}
	\begin{minipage}[b]{.48\textwidth}
		\centering
		\includegraphics[width=\textwidth,page=2]{matching-replacement-2}
		\subcaption{}\label{fi:matching-replacement-2}
	\end{minipage}	
	\caption{\label{fi:matching-replacement-2} Addition of the vertices of color $c_2$ to create a topological book embedding of a $3$-colored caterpillar.}	
\end{figure}

Let $(u_1,u_2)$ be an edge that has to be replaced by a subgraph $C'$ with $k$ vertices (notice that $C'$ contains $u_1$ and $u_2$ as leaves attached to the first and last vertex of the backbone of $C'$). We add $k$ image points to the drawing as done in the proof of Lemma~\ref{le:3-colored-book-embedding} and, as in that proof, the first arc of the edge $(u_1,u_2)$ is replaced by an arc connecting $u_1$ to $p_1$, while the last image point $p_k$ is already connected to $u_2$ by means of the remaining part of the original edge $(u_1,u_2)$. Finally, let $\gamma''$ be $1$-page book embedding of $C'$ with the property that $u_1$ and $u_2$ are the first and the last vertex of $\gamma''$ (such an embedding always exists). We ``embed'' $\gamma''$ on the image points, placing all the edges on the top page. After all dummy edges have been replaced with $1$-page book embeddings that use image points, we add the matching edges as done in the proof of Lemma~\ref{le:3-colored-book-embedding} and use them to construct the final drawing. For vertices that belong to the backbone of $C$, the matching edge is replaced by two edges as in the case of paths, while for the vertices that are leaves the matching edge is not replaced (see Figure~\ref{fi:matching-replacement-2}). As for the case of paths, the final drawing is planar and each edge crosses the spine at most twice.\qed    
\end{proof}

\begin{lemma}\label{le:3-colored-book-embedding}
Let $P$ be a $3$-colored path and let $\sigma$ be a $3$-colored sequence compatible with $P$. Path $P$ admits a topological book embedding $\Gamma_P$ consistent with $\sigma$ such that every edge of $\Gamma_P$ crosses the spine at most twice.
\end{lemma}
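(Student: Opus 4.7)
The plan is to reduce the $3$-colored case to the $2$-colored case handled by Lemma~\ref{le:2-colored-book-embedding}, and then ``reinsert'' the vertices of the third color exploiting the extra properties (b) and (c) of that lemma. Concretely, I would first pick a color $c_2$ that does not coincide with the colors of either endpoint of $P$ (such a color exists: if the two endpoints of $P$ already share a color, any of the remaining two colors works; otherwise, the unique color distinct from those of the endpoints works). I would then contract each maximal monochromatic subpath of $P$ of color $c_2$ into a single edge joining its two (non-$c_2$) neighbors, obtaining a $2$-colored path $P'$. Removing the elements of color $c_2$ from $\sigma$ yields a $2$-colored sequence $\sigma'$ compatible with $P'$. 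Applying Lemma~\ref{le:2-colored-book-embedding} to $(P',\sigma')$ produces a topological book embedding $\gamma'$ with at most two spine crossings per edge and enjoying properties (a)--(d).

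Next, I would enlarge $\gamma'$ by inserting a set $Q$ of points colored $c_2$ along the spine at the positions prescribed by $\sigma$; by property (b) every such position lies in a segment of the spine that is accessible from below, so I can locate each new point in a piece of spine that is visible from below. The contracted edges of $P'$ must now be expanded back into paths through vertices of $c_2$. For a contracted edge $(u_1,u_2)$ replacing a subpath $u_1,v_1,\dots,v_k,u_2$, I would pick the first spine-crossing $\chi$ of $(u_1,u_2)$ encountered while traversing from $u_1$ to $u_2$. By property (c), $\chi$ lies in a maximal subsegment $s$ of $\ell$ that is visible from below, so I can place $k-1$ image points $p_1,\dots,p_{k-1}$ inside $s$ and identify $p_k$ with $\chi$ itself. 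I would then replace the original edge by the polyline $u_1 \to p_1 \to p_2 \to \cdots \to p_k \to u_2$, drawing the arcs $(p_i,p_{i+1})$ in the top page and keeping the final arc $(p_k,u_2)$ as the leftover piece of the original edge. This new path has at most one spine crossing (the arc $(p_k,u_2)$), since the top-page arcs between image points cross nothing and $u_1 \to p_1$ can be routed alongside the (now-removed) first arc of the contracted edge without crossing the spine.

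Finally, I still need the actual $c_2$-colored vertices of $P$ to appear at the points of $Q$ (not at the image points). I would draw, in the bottom page, a non-crossing matching between the image points and the points of $Q$: since both families are visible from below, each individual arc is realizable without intersecting prior edges, and a standard brackets-matching argument (pair the image points and the $Q$-points along the spine as in a balanced parenthesization) guarantees the whole matching is non-crossing. These matching arcs each add at most one extra spine crossing to the edges they eventually replace. To form the actual path, I would finally splice each matching arc into the polyline, turning the combined arcs into the two edges incident to the corresponding $c_2$-vertex, keeping the total number of spine crossings per edge at most two.

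The main obstacle I anticipate is showing that all of these geometric surgeries simultaneously preserve planarity while meeting the spine-crossing budget. The delicate point is placing all image points for a single contracted edge into a single visible-from-below subsegment of the spine and then threading the matching arcs through the bottom page in the correct cyclic order so that neither the matching arcs nor the reinserted path arcs cross the spine more than once each (apart from the one pre-existing crossing $\chi$ that is reused). Here the strong wording of property (c)---every spine crossing, not merely some, is visible from below---is what makes the construction go through uniformly for every contracted edge.
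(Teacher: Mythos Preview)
Your proposal is correct and follows essentially the same approach as the paper: contract the maximal $c_2$-subpaths, invoke Lemma~\ref{le:2-colored-book-embedding}, place image points in a visible-from-below segment adjacent to the first spine crossing $\chi$ of each contracted edge, match them to the $Q$-points via non-crossing bottom-page arcs using a brackets-matching argument, and splice. The paper's write-up differs only in bookkeeping details (it explicitly splits each image point $p_i$ into a pair $p'_i,p''_i$ when splicing and tracks which final edges acquire one versus two spine crossings), but the construction and the reasons it works are exactly the ones you describe.
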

\begin{proof}[sketch]
Let $c_0$ and $c_1$ be the colors of the end-vertices of $P$ (possibly $c_0=c_1$) and $c_2$ be the color distinct from $c_0$ and $c_1$. Let $v_1,v_2,\dots,v_k$ be a maximal subpath of $P$ colored $c_2$. Let $u_1$ be the vertex before $v_1$ along $P$ and let $u_2$ be the vertex after $v_k$ along $P$. Since the end-vertices of $P$ have color distinct from $c_2$, $u_1$ and $u_2$ always exist. We replace the subpath $u_1,v_1,v_2,\dots,v_k,u_2$ with an edge $(u_1,u_2)$. We do this replacement for every maximal subpath colored $c_2$. Let $P'$ be the resulting $2$-colored path and let $\sigma'$ be the $2$-colored sequence obtained from $\sigma$ by removing all elements of color $c_2$.

\begin{figure}[tb]
	\centering
	\begin{minipage}[b]{.48\textwidth}
		\centering
		\includegraphics[width=\textwidth,page=1]{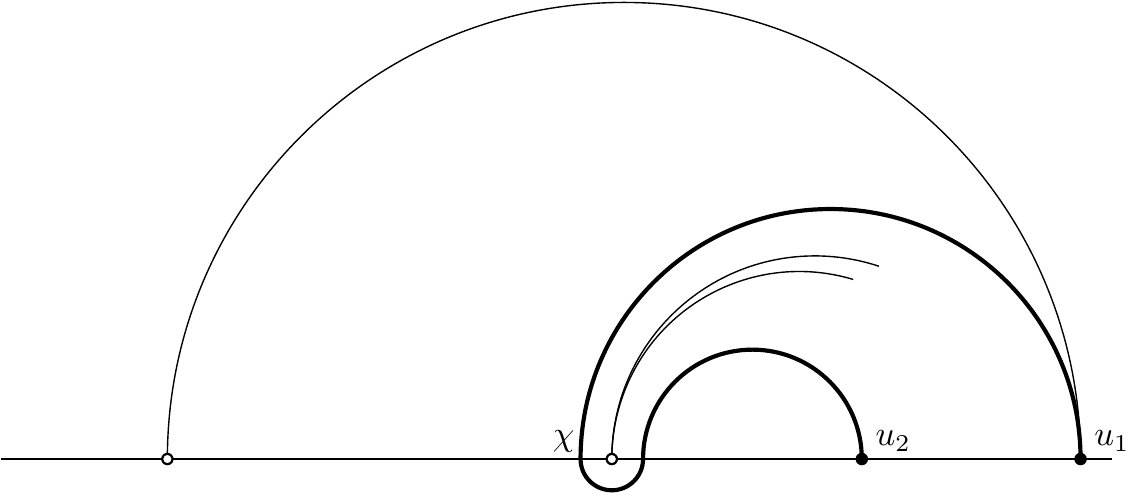}
		\subcaption{}\label{fi:matching-replacement}
	\end{minipage}
	\begin{minipage}[b]{.48\textwidth}
		\centering
		\includegraphics[width=\textwidth,page=2]{matching-replacement}
		\subcaption{}\label{fi:matching-replacement}
	\end{minipage}	
	\begin{minipage}[b]{.48\textwidth}
		\centering
		\includegraphics[width=\textwidth,page=3]{matching-replacement}
		\subcaption{}\label{fi:matching-replacement}
	\end{minipage}		
	\begin{minipage}[b]{.48\textwidth}
		\centering
		\includegraphics[width=\textwidth,page=4]{matching-replacement}
		\subcaption{}\label{fi:matching-replacement}
	\end{minipage}		
	\caption{\label{fi:matching-replacement} Addition of the vertices of color $c_2$ to create a topological book embedding of a $3$-colored path.}	
\end{figure}

By Lemma~\ref{le:2-colored-book-embedding}, $P'$ admits a topological book embedding $\gamma'$  consistent with $\sigma'$ that satisfies properties (a), (b), (c) and (d). We add to $\gamma'$ a set $Q$ of points colored $c_2$ to represent the removed vertices that will be added back. These points must be placed so that the sequence of colors along the spine coincides with $\sigma$. By property (b) of $\gamma'$ all these points can be placed so that they are accessible from below. 
We now have to replace some edges of $P'$ with paths of vertices colored $c_2$. Refer to Figure~\ref{fi:matching-replacement} for an illustration (a full example is shown in Figures~\ref{fi:2coloredtbe-2-3} and~\ref{fi:2coloredtbe-4-5}). Let $(u_1,u_2)$ be an edge that has to be replaced by a path $u_1,v_1,v_2,\dots,v_k,u_2$. For each vertex $v_i$ to be added ($i=1,2,\dots,k$) we add an \emph{image point} to the drawing. The image points are added as follows. By property (a), the edge $(u_1,u_2)$ crosess the spine at least once. Let $\chi$ be the point where $(u_1,u_2)$ crosses the spine for the first time when going from $u_1$ to $u_2$. By property (c) $\chi$ is visible from below. This means that there is a segment $s$ of $\ell$ with $\chi$ as an endpoint that is visibile from below. We place $k-1$ image points $p_1,p_2,\dots,p_{k-1}$ inside this segment, while $\chi$ is the $k$-th image point $p_k$ (it is the leftmost if $s$ is to the left of $\chi$, while it is the rightmost if $s$ is to the right of $\chi$). The first arc of the edge $(u_1,u_2)$ is replaced by an arc connecting $u_1$ to $p_1$; each image point $p_i$ is connected to the $p_{i+1}$ ($i=1,2,k-1$) by means of an arc in the top page; finally, the last image point $p_k$ is already connected to $u_2$ by means of the remaining part of the original edge $(u_1,u_2)$. Notice that the edge $(u_1,p_1)$ does not cross the spine, and the same is true for any edge $(p_i,p_{i+1})$, while the edge $(p_k,u_2)$ crosses the spine at most once (the original edge had at most two spine crossing one of which was at $\chi=p_k$). We have replaced the edge $(u_1,u_2)$ with a path $\pi=\langle u_1, p_1,p_2,\dots,p_k, u_2 \rangle$ with $k+1$ edges, as needed. However, the points representing the intermediate vertices of this path are not the points of the set $Q$. The idea then is to ``connect'' the image points to the points of $Q$. To this aim, we add matching edges in the bottom page between the image points and the points of $Q$. Since both the points of $Q$ and the image points are visible from below, these matching edges do not cross any other existing edge. Moreover, by using a simple brackets matching algorithm, we can add the matching edges so that they do not cross each other. We finally use the matching edges to create the actual path that will represent $u_1,v_1,v_2,\dots,v_k,u_2$. We ``cut'' the path $\pi$ at each image point $p_i$ and replace it with two consecutive points $p'_i$ and $p''_i$. The edge $(u_1,p_1)$ becomes an edge $(u_1,p'_1)$, each edge $(p_i,p_{i+1})$ becomes an edge $(p''_{i},p'_{i+1})$, and the edge $(p_k,u_2)$ becomes an edge $(p''_k,u_2)$. Denote by $q_i$ the point of $Q$ matched with $p_i$. For each $q_i$ we add two edges $(q_i,p'_i)$ and $(q_i,p''_i)$ following the matching edge $(p_i,q_i)$. In this way we have a path $u_1,p'_1,q_1,p''_1,p'_2,q_2,p''_2,\dots,p'_k,q_k,p''_k,u_2$. The edge $(q_k,p''_k)$ and the first arc of $(p''_k,u_2)$ are both in the same page; thus we remove $p''_k$ and connect $q_k$ directly to the second arc $(p''_k,u_2)$. The points $p'_i$ and $p''_i$ ($i=1,2,\dots,k$) do not represent vertices but spine crossings, while the points $q_i$ represent vertices of $P$. The resulting drawing is planar by construction. Also, each added edge crosses the spine at most twice. Namely, the first edge from $u_1$ to $q_1$ has a spine crossing at $p'_1$; each edge from $q_i$ to $q_{i+1}$ ($i=1,2,\dots,k$) has two spine crossings at $p''_i$ and a $p'_{i+1}$. Finally, the edge from $p_k$ to $u_2$ possibly has a spine crossing of the original edge $(u_1,u_2)$.\qed
\end{proof}

\begin{theorem}
Let $P$ be a $4$-colored path with $n$ vertices and let $S$ be a 3-colored point set compatible with $P$. If the first $h \geq 2$ vertices along $P$ only have two colors and the remaining $n-h$ only have the other two colors, then $P$ has a $4$-colored point-set embedding on $S$ with curve complexity at most $5$.
\end{theorem}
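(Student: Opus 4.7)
The plan is to construct a topological book embedding of $P$ consistent with $\sigma = seq(S)$ with at most two spine crossings per edge; Lemma~\ref{le:top-pse} then yields a $4$-colored point-set embedding with curve complexity at most $5$.

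Write $P = P_1 \cdot e \cdot P_2$, where $P_1$ consists of the first $h$ vertices (colored from a two-element set $\{c_0,c_1\}$), $e=(x,y)$ is the bridge edge, and $P_2$ is the suffix of the remaining $n-h$ vertices (colored from the disjoint pair $\{c_2,c_3\}$). Partition $S$ analogously as $S_{01}\cup S_{23}$ and let $\sigma_{01}$, $\sigma_{23}$ be the $2$-colored subsequences of $\sigma$ induced by $S_{01}$ and $S_{23}$; compatibility of $P_1$ with $\sigma_{01}$ and of $P_2$ with $\sigma_{23}$ follows from $|V_c|=|S_c|$ for each color $c$. I would apply Lemma~\ref{le:2-colored-book-embedding} to $P_1$ with $\sigma_{01}$ to obtain a topological book embedding $\gamma_1$ with at most two spine crossings per edge satisfying properties (a)--(d). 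Applying the same lemma to the reverse of $P_2$ (so that $y$ plays the role of the first vertex) with $\sigma_{23}$, and then reflecting the resulting drawing about the spine, yields a topological book embedding $\gamma_2$ of $P_2$ with at most two spine crossings per edge and the vertically-reflected versions of (a)--(d); in particular, accessibility of inter-vertex segments from above and $y$ hook-visible from the left through the bottom page.

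I would then superimpose $\gamma_1$ and $\gamma_2$ on a common spine whose points are placed according to $\sigma$: the $S_{01}$-points preserve their relative order from $\sigma_{01}$ and the $S_{23}$-points preserve theirs from $\sigma_{23}$, interleaved as prescribed by $\sigma$. The arcs of $\gamma_1$ are drawn in a thin strip hugging the spine; by property (b) the bottom-page pieces of $\gamma_1$ can be routed inside the accessible-from-below sub-intervals reserved between consecutive $\gamma_1$-vertices. Symmetrically, by the reflected property (b) of $\gamma_2$, its top-page pieces are routed inside the accessible-from-above sub-intervals between consecutive $\gamma_2$-vertices. Because the reserved regions for $\gamma_1$ and $\gamma_2$ can be nested or kept disjoint in each page, no arc of $\gamma_1$ crosses an arc of $\gamma_2$. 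Finally, property (d) of $\gamma_1$ gives a hook access interval immediately to the right of $x$ in the top page, free of spine crossings and containing at most one extra vertex; the reflected property (d) of $\gamma_2$ gives a symmetric bottom-page hook interval to the left of $y$. Routing $e$ from $x$ into its top-page hook, crossing the spine once, and entering $y$ through its bottom-page hook produces the bridge edge with at most two spine crossings and no crossings of existing edges.

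The main obstacle is verifying that the superposition of $\gamma_1$ and $\gamma_2$ is planar: arcs of $\gamma_1$ whose spine-spans contain $S_{23}$-points must remain disjoint from the arcs of $\gamma_2$ incident to those points. The plan resolves this by exploiting properties (b) and (c) of Lemma~\ref{le:2-colored-book-embedding} (and their reflected counterparts), which reserve enough free regions on both sides of the spine to route the two sub-drawings as nested families of arcs in each page. Once this topological book embedding with at most two spine crossings per edge is in place, Lemma~\ref{le:top-pse} gives the desired $4$-colored point-set embedding with curve complexity at most $5$.
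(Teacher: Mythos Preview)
Your approach is essentially the paper's: split $P$ at the bridge edge, apply Lemma~\ref{le:2-colored-book-embedding} to each $2$-colored half (reflecting one about the spine), merge the two book embeddings on a common spine by exploiting property~(b) to keep them disjoint, and finally draw the bridge edge using the visibility guaranteed by property~(d). The merging argument and the final appeal to Lemma~\ref{le:top-pse} are exactly right.

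Your handling of $P_2$ and the bridge edge, however, is tangled. Since $y$ is the \emph{first} vertex of $P_2$, reversing $P_2$ makes $y$ its \emph{last} vertex, not its first as you write; and reflecting about the spine swaps top with bottom but not right with left, so ``hook visible from the right'' does not become ``from the left.'' These slips leave the routing of $e$ underspecified. The paper sidesteps both issues by reversing $P_1$ instead of $P_2$: then $x$ is the first vertex of $\overline{P_1}$ and hence plainly visible from above by property~(d), while $y$ is already the first vertex of $P_2$ and, after reflecting $\gamma_2$, plainly visible from below. With both endpoints enjoying ordinary visibility (rather than hook visibility), the bridge edge is drawn with a single spine crossing placed to the right of all existing vertices and crossings---simpler and cleaner than the two-hook construction you outline.
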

\begin{proof}
Removing the edge between the $h$-th vertex and the $(h+1)$-th vertex we obtain two $2$-colored paths $P_1$ and $P_2$. Let $\sigma_i$ ($i=1,2$) be the $2$-colored sequence obtained from $\sigma$ considering only the elements whose color is one of the colors of $P_i$. Let $\overline{P_1}$ be the ``reversed'' path of $P_1$. Path $P$ can be obtained by connecting the first vertex of $\overline{P_1}$ to the first vertex of $P_2$. By Lemma~\ref{le:2-colored-book-embedding} $\overline{P_1}$ (respectively $P_2$) has a topological book embedding $\overline{\gamma_1}$ (respectively $\gamma_2$) consistent with $\sigma_1$ (respectively $\sigma_2$) with the additional property that every segment between two consecutive vertices is visible from below (respectively above) and the first vertex is visible from above (respectively below). Thus, $\overline{\gamma}_1$ and $\gamma_2$ can be arranged together in a topological book embedding of $P_1 \cup P_2$ on $S$ consistent with $\sigma$ by intermixing their vertices according to the order defined by $\sigma$ and without creating any intersection between the two. Since the first vertex of $\overline{\gamma}_1$ is visible from above and the first one of $\gamma_2$ is visible from below, we can draw the unique missing edge with one spine crossing placed after any existing vertices and spine crossings.  \qed
\end{proof} 

\newpage

\section{Additional figures}

\begin{figure}[h!]
	\centering
	\includegraphics[width=\textwidth,page=1]{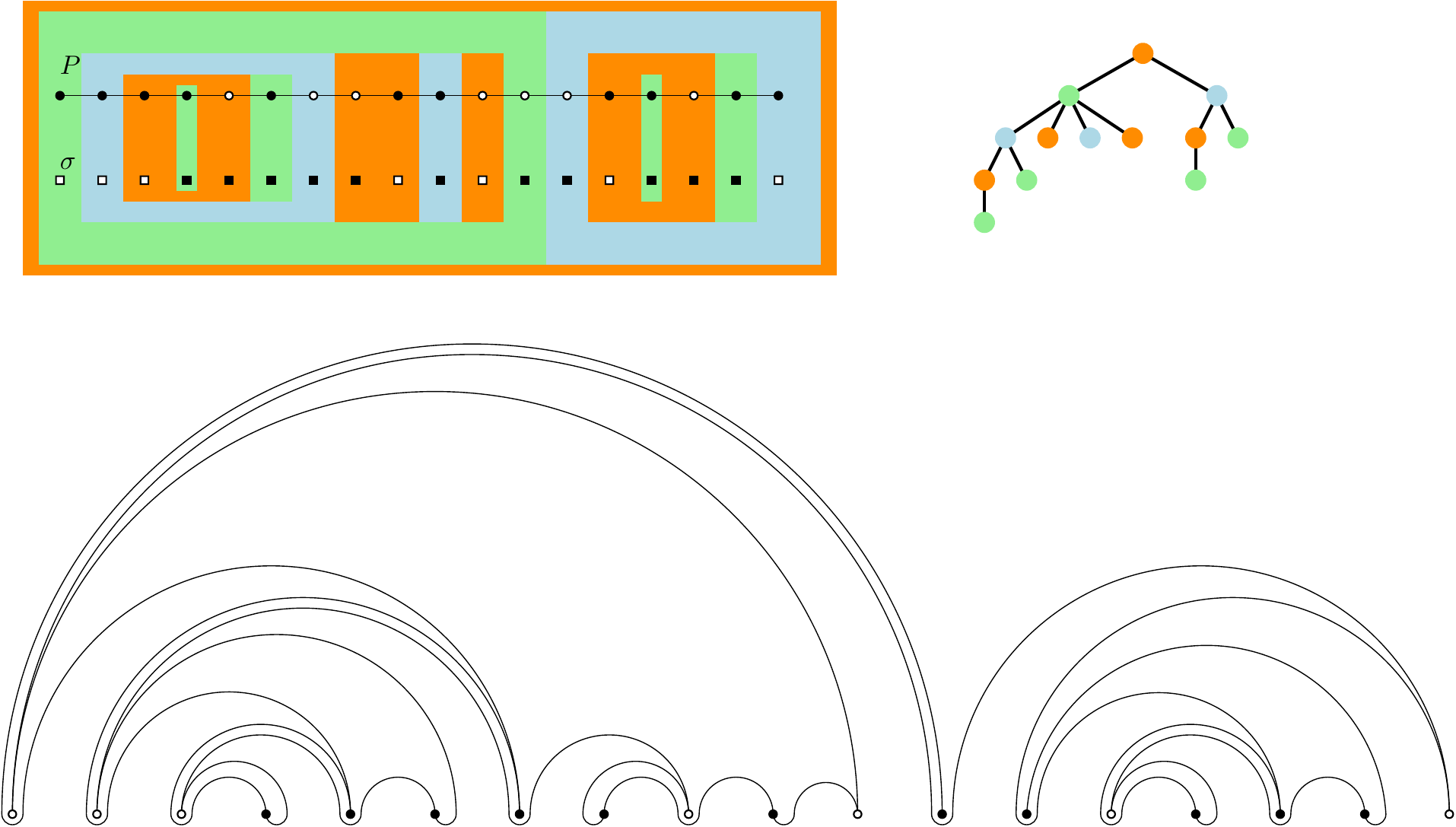}
	\caption{\label{fi:2coloredtbe}A topological book embedding of a $2$-colored path consistent with a given $2$-colored sequence constructed according to the algorithm described in the proof of Lemma~\ref{le:2-colored-book-embedding}. To help the reader, colored rectangles highlighting the input of the various recursive calls and the recursion tree are also shown.}	
\end{figure}

\begin{figure}[htbp]
	\centering
	\begin{minipage}[b]{\textwidth}
		\centering
		\includegraphics[width=\textwidth,page=2]{2coloredtbe}
		\subcaption{}\label{fi:2coloredtbe-2}
	\end{minipage}
	\begin{minipage}[b]{\textwidth}
		\centering
		\includegraphics[width=\textwidth,page=3]{2coloredtbe}
		\subcaption{}\label{fi:2coloredtbe-3}
	\end{minipage}	
	\caption{\label{fi:2coloredtbe-2-3} Construction of a topological book embedding of a $3$-colored path consistent with a given $3$-colored sequence starting from the topological book embedding of Figure~\ref{fi:2coloredtbe} (1/2).}	
\end{figure}

\begin{figure}[htbp]
	\centering
	\begin{minipage}[b]{\textwidth}
		\centering
		\includegraphics[width=\textwidth,page=4]{2coloredtbe}
		\subcaption{}\label{fi:2coloredtbe-4}
	\end{minipage}
	\begin{minipage}[b]{\textwidth}
		\centering
		\includegraphics[width=\textwidth,page=5]{2coloredtbe}
		\subcaption{}\label{fi:2coloredtbe-5}
	\end{minipage}	
	\caption{\label{fi:2coloredtbe-4-5}Construction of a topological book embedding of a $3$-colored path consistent with a given $3$-colored sequence starting from the topological book embedding of Figure~\ref{fi:2coloredtbe} (2/2).}	
\end{figure}

\end{document}